\newcommand{\ob}[1]{\mkern 1.5mu\overline{\mkern-1.5mu#1\mkern-1.5mu}\mkern 1.5mu}
\tikzset{
        dot/.style={circle,inner sep=1.7pt,fill,label={#1},name=#1},
    right angle quadrant/.code={
        \pgfmathsetmacro\quadranta{{1,1,-1,-1}[#1-1]}     \pgfmathsetmacro\quadrantb{{1,-1,-1,1}[#1-1]}},
    right angle quadrant=1, right angle length/.code={\def\rightanglelength{#1}},   right angle length=2ex, right angle symbol/.style n args={3}{
        insert path={
            let \p0 = ($(#1)!(#3)!(#2)$) in     let \p1 = ($(\p0)!\quadranta*\rightanglelength!(#3)$), \p2 = ($(\p0)!\quadrantb*\rightanglelength!(#2)$) in let \p3 = ($(\p1)+(\p2)-(\p0)$) in  (\p1) -- (\p3) -- (\p2)
        }
    }
}
\newcommand{\BeginMyItemize}{\begin{itemize}\setlength{\itemsep}{-\parskip}}
\newcommand{\EndMyItemize}{\end{itemize}}
\newcommand{\myitemize}[1]{\BeginMyItemize #1 \EndMyItemize}
\newcommand{\BeginMyEnumerate}{\begin{enumerate}\setlength{\itemsep}{-\parskip}}
\newcommand{\EndMyEnumerate}{\end{enumerate}}
\renewcommand{\leq}{\leqslant}
\renewcommand{\geq}{\geqslant}
\newcommand{\mypara}[1]{\vspace{10pt} \noindent \textbf{\sffamily #1}}
\theoremstyle{plain}
\newenvironment{myquote}{\list{}{\leftmargin=4mm\rightmargin=4mm}\item[]}{\endlist}
\newenvironment{claiminproof}{\begin{myquote}\noindent\emph{Claim.}}{\end{myquote}}
\newenvironment{proofinproof}{\begin{myquote}\noindent\emph{Proof.}}{
  \end{myquote}}
\theoremstyle{plain}
\newtheorem{observation}[theorem]{Observation}
\newcommand{\Reals}{{\mathbb R}}
\newcommand{\bd}{\partial}
    \newcommand{\mycl}{\mathrm{cl}}
\newcommand{\conv}{\mathrm{conv}} \newcommand{\VD}{\mathrm{Vor}}  
\newcommand{\tree}{\ensuremath{\mathcal{T}}}
\newcommand{\A}{\ensuremath{\mathcal{A}}}
\newcommand{\C}{\ensuremath{\mathcal{C}}}
\newcommand{\D}{\ensuremath{\mathcal{D}}}
\newcommand{\V}{\ensuremath{\mathcal{V}}}
\newcommand{\myin}{\mathrm{in}}
\newcommand{\myout}{\mathrm{out}}
\newcommand{\myleft}{\mathrm{left}}
\newcommand{\myright}{\mathrm{right}}
\newcommand{\eps}{\varepsilon}
\newcommand{\mydef}{\coloneqq} \newcommand{\etal}{\emph{et al.}\xspace}
\newcommand{\beats}{\succ_{\hspace*{-0.5mm}\beta}}
\newcommand{\isbeatenby}{\prec_{\hspace*{-0.2mm}\beta}}
\newcommand{\med}{\mathrm{med}}
\newcommand{\Alg}{\textsc{Alg}\xspace}
\newcommand{\slope}{\mathrm{slope}}
\newcommand{\region}{\mathrm{region}}
\DeclareMathOperator{\vd}{vd}
\title{On \texorpdfstring{$\beta$}{\textbeta}-Plurality Points in Spatial Voting Games}
\author{Boris Aronov}{
        Tandon School of Engineering, New York University, Brooklyn, NY 11201, USA}{boris.aronov@nyu.edu}{http://orcid.org/0000-0003-3110-4702}{Partially supported by NSF grant CCF-15-40656 and by grant 2014/170 from the US-Israel Binational Science Foundation.}
\author{Mark de Berg}{
        Department of Computing Science, TU Eindhoven, 5600 MB Eindhoven, The Netherlands}{m.t.d.berg@tue.nl}{https://orcid.org/0000-0001-5770-3784}{Supported by the Netherlands' Organisation for Scientific Research (NWO) under project no.~024.002.003.}
\author{Joachim Gudmundsson}{
        School of Computer Science, University of Sydney, Sydney, NSW 2006, Australia}{joachim.gudmundsson@sydney.edu.au}{https://orcid.org/0000-0002-6778-7990}{Supported under the Australian Research Council Discovery Projects funding scheme (project numbers DP150101134 and DP180102870).}
\author{Michael Horton}{
        Sportlogiq, Inc., Montreal, Quebec H2T 3B3, Canada}{michael.horton@sportlogiq.com} {https://orcid.org/0000-0001-6388-9634}{}
\authorrunning{B.~Aronov, M.~de Berg, J.~Gudmundsson, and M.~Horton} 
\keywords{Computational geometry, Spatial voting theory, Plurality point, Computational social choice}
\begin{document}

\maketitle

\begin{abstract}
Let $V$ be a set of $n$ points in $\Reals^d$, called \emph{voters}.  A point $p\in \Reals^d$
is a \emph{plurality point} for $V$ when the following holds: for every $q\in\Reals^d$
the number of voters closer to $p$ than to $q$ is at least the number of voters
closer to $q$ than to $p$. Thus, in a vote where
each~$v\in V$ votes for the nearest proposal (and voters for which
the proposals are at equal distance abstain), proposal~$p$ will not lose against
any alternative proposal~$q$. For most voter sets a plurality point
does not exist. We therefore introduce the concept of \emph{$\beta$-plurality points}, which are defined similarly
to regular plurality points except that
the distance of each voter to $p$ (but not to~$q$) is scaled by a factor~$\beta$, for
some constant~$0<\beta\leq 1$. We investigate the existence and computation
of $\beta$-plurality points, and obtain the following results.
\begin{itemize}
\item Define $\beta^*_d \mydef \sup \{ \beta : \text{any finite multiset $V$ in $\Reals^d$ admits a $\beta$-plurality point} \}$.
  We prove that $\beta^*_2 = \sqrt{3}/2$, and that
  $1/\sqrt{d} \leq \beta^*_d \leq \sqrt{3}/2$ for all $d\geq 3$.
\item Define $\beta(p, V) \mydef \sup \{ \beta : \text{$p$ is a $\beta$-plurality point for $V$}\}$.  Given a voter set $V \in \Reals^2$, we provide an algorithm that runs in $O(n \log n)$ time and computes a point $p$ such that $\beta(p, V) \geq \beta^*_2$. Moreover, for $d\geq 2$ we can compute a point~$p$ with
        $\beta(p,V) \geq 1/\sqrt{d}$ in $O(n)$ time.
\item Define $\beta(V) \mydef \sup \{ \beta : \text{$V$ admits a $\beta$-plurality point}\}$.
      We present an algorithm that, given a voter set $V$ in $\Reals^d$, computes an $(1-\eps)\cdot \beta(V)$ plurality point in time
      $O(\frac{n^2}{\eps^{3d-2}} \cdot \log \frac{n}{\eps^{d-1}} \cdot \log^2 \frac {1}{\eps})$.
\end{itemize}
\end{abstract}

\section{Introduction}
\label{se:introduction}

\subparagraph{Background.}
Voting theory is concerned with mechanisms to combine  preferences of
individual voters into a collective decision. A desirable property of such a
collective decision is that it is stable, in the sense that no alternative is
preferred by more voters. In spatial voting games~\cite{b-rgdm-48,d-etpad-57} this is formalized as follows; see Fig.~\ref{fig:Plott_and_yolk}(i) for an example in a political context.
The space of all possible decisions is modeled as~$\Reals^d$ and every voter
is represented by a point in $\Reals^d$, where the dimensions represent
different aspects of the decision and the point representing a voter corresponds
to the ideal decision for that voter. A voter $v$ now prefers a proposed decision
$p\in\Reals^d$ over some alternative proposal $q\in\Reals^d$ when~$v$ is closer to~$p$ than to~$q$.
Thus a point $p\in\Reals^d$ represents a stable decision for a given finite set~$V$ of voters
if, for any alternative $q\in\Reals^d$, we have
$\left |\strut \{v\in V : |vp|<|vq|\} \right| \geq \left|\strut\{v\in V : |vq| < |vp| \}\right|$.
Such a point~$p$ is called a \emph{plurality point}.\footnote{One can also
require $p$ to be strictly more popular than any alternative~$q$. This is sometimes
called a \emph{strong} plurality point, in contrast to the \emph{weak} plurality points
that we consider.}

For $d=1$, a plurality point always exists, since in $\Reals^1$ a median of $V$ is
a plurality point. This is not true in higher dimensions, however.
Define a \emph{median hyperplane} for a set $V$ of voters
to be a hyperplane~$h$ such that both open half-spaces defined by $h$ contain fewer
than $|V|/2$ voters. For $d\geq 2$ a plurality point in $\Reals^d$ exists if and only if all median
hyperplanes for~$V$ meet in a common point; see Fig.~\ref{fig:Plott_and_yolk}(ii).
This condition is known as \textit{generalized Plott symmetry} conditions~\cite{em-ppscm-83,plott};
see also the papers by  Wu~\etal~\cite{wlwc-cppcp-13} and  de~Berg~\etal~\cite{bgm-facpp-18},
who present algorithms to determine the existence of a plurality point for a given set of voters.

It is very unlikely that voters are distributed in such a way that all median
hyperplanes have a common intersection. (Indeed, if this happens, then a slightest generic perturbation of a single voter destroys the existence of the plurality point.)
When a plurality point does not exist, we may want to find a point that is
close to being a plurality point. One way to formalize this is to consider
the center of the \emph{yolk} (or \emph{plurality ball}) of~$V$, where the
yolk~\cite{feld-centripetal,gw-cysvg-19,yolk-mckelvey,miller-handbook} is the smallest ball
intersecting every median hyperplane of~$V$.
We introduce $\beta$-plurality points as an alternative way to relax the requirements
for a plurality point, and study several combinatorial and algorithmic questions
regarding $\beta$-plurality points.

  \begin{figure}
  \centering
  \includegraphics[width=12cm]{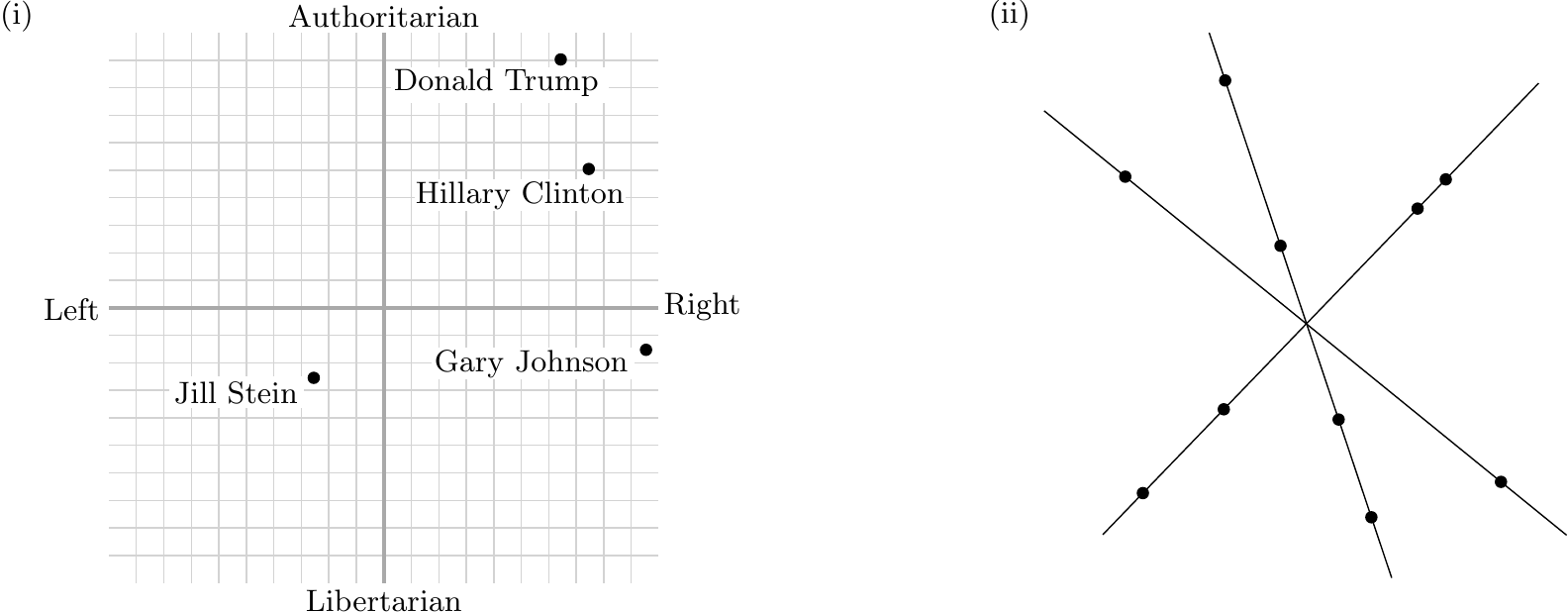}
        \caption{(i) The US presidential candidates 2016 modelled in the spatial voting model, according
                     to The Political Compass (\url{https://politicalcompass.org/uselection2016}).
                     Note that the points representing voters are not shown.
                 (ii) The point set satisfies the generalized Plott symmetry conditions and therefore admits a plurality point.
}
        \label{fig:Plott_and_yolk}
    \end{figure}

\subparagraph{$\beta$-Plurality points: definition and main questions.}
Let $V$ be a multiset\footnote{Even though we allow $V$ to be a multiset,
we sometimes refer to it as a ``set'' to ease the reading. When the fact
that $V$ is a multiset requires special treatment, we explicitly address this.}
of $n$ voters in $\Reals^d$ in arbitrary, possibly coinciding, positions.
In the traditional setting a proposed point~$p\in \Reals^d$ wins a voter $v\in V$
against an alternative~$q$ if $|pv|<|qv|$. We relax this by fixing a parameter~$\beta$
with $0<\beta\leq 1$ and letting $p$ win $v$ against $q$ if $\beta\cdot|pv| < |qv|$.
Thus we give an advantage to the initial proposal~$p$ by scaling distances
to~$p$ by a factor~$\beta\leq 1$.
We now define
\[
V[p\beats q] \mydef \{ v\in V : \beta \cdot |pv| < |qv| \}
\quad \text{and} \quad
V[p\isbeatenby q] \mydef \{ v\in V : \beta \cdot |pv| > |qv| \}
\]
to be the multisets of voters won by $p$ over $q$ and lost by
$p$ against $q$, respectively. Finally, we say that
a point $p\in\Reals^d$ is a \emph{$\beta$-plurality point} for $V$ when
\[
\left |\strut V[p\succ_\beta q] \right | \geq \left|\strut V[p\isbeatenby q]\right|,
    \quad \text{for any point $q\in \Reals^d$}.
\]
Observe that $\beta$-plurality is \emph{monotone} in the sense
that if $p$ is a $\beta$-plurality point then $p$ is also a $\beta'$-plurality point for
all~$\beta' < \beta$.

The spatial voting model was popularised by Black~\cite{b-rgdm-48} and Down~\cite{d-etpad-57} in the 1950s. Stokes~\cite{s-smpc-63} criticized its simplicity and was the first to highlight
the importance of taking non-spatial aspects into consideration.
The reasoning is that voters may evaluate a candidate not only on their policies---their position in
the policy space---but also take their so-called \emph{valence} into account:
charisma, competence, or other desirable qualities in the public's mind~\cite{e-vp-19}.
A candidate can also increase her valence by a stronger party
support~\cite{w-tpsed-05} or campaign spending~\cite{hlm-ppcsv-08}.
Several models have been proposed to bring the spatial model closer to a more realistic voting approach;
see~\cite{supreme,ghr-eavec-11,empirical} as examples. A common model is the multiplicative model,
introduced by Hollard and Rossignol~\cite{hr-aavas-08}, which is closely related to the concept
of a $\beta$-plurality point. The multiplicative model augments the existing spatial utility function
by scaling the candidate's valence by a multiplicative factor. Note that in the 2-player game considered
in this paper the multiplicative model is the same as our $\beta$-plurality model.
From a computational point of view very little is known about the multiplicative model.
We are only aware of a result by Chung~\cite{jonathan}, who studied the problem of positioning
a new candidate in an existing space of voters and candidates, so that the valence required
to win at least a given number of voters is minimized.
\medskip

One reason for introducing $\beta$-plurality was that
a set $V$ of voters in $\Reals^d$, for $d\geq 2$, generally does not admit a plurality point.
This immediately raises the question: Is it true that, for $\beta$ small enough, any set $V$
admits a $\beta$-plurality point? If so, we want to know the largest $\beta$ such that
any voter set~$V$ admits a $\beta$-plurality point, that is, we wish to determine
\[
\beta^*_d \mydef \sup \{ \beta : \text{any finite multiset $V$ in $\Reals^d$ admits a $\beta$-plurality point} \}.
\]
Note that $\beta^*_1 =1$, since any set $V$ in $\Reals^1$ admits a plurality point
and 1-plurality is equivalent to the traditional notion of plurality.

After studying this combinatorial problem in Section~\ref{se:universal}, we turn our attention
to the following algorithmic question: given a voter set $V$, find a point~$p$
that is a $\beta$-plurality point for the largest possible value~$\beta$.
In other words, if we define
\[
\beta(V) \mydef \sup \{ \beta : \mbox{$V$ admits a $\beta$-plurality point} \}
\]
and
\[
\beta(p,V) \mydef \sup \{ \beta : \mbox{$p$ is a $\beta$-plurality point for $V$} \}
\]
then we want to find a point $p$ such that $\beta(p,V)=\beta(V)$.

\subparagraph{Results.}
In Section~\ref{se:universal} we prove that $\beta^*_d\leq \sqrt{3}/2$ for all~$d\geq 2$.
To this end we first show that $\beta^*_d$ is non-increasing in~$d$, and then we
exhibit a voter set~$V$ in~$\Reals^2$ such that~$\beta(V)\leq \sqrt{3}/2$.
We also show how to construct in $O(n \log n)$ time, for any given $V$ in~$\Reals^2$, a point $p$ such that $\beta(p,V) \geq \sqrt{3}/2$,
thus proving that $\beta^*_2=\sqrt{3}/2$.
Moreover, for $d\geq 2$ we show how to construct in $O(n)$ time a point $p$ such that $\beta(p, V) \geq 1/\sqrt{d}$,
which means that $\beta^*_d \geq 1/\sqrt{d}$.\footnote{\label{footnote1}Very recently Filtser and Filtser~\cite{filtser-paper} improved these results for $d\geq 4$ by proving that $\beta^*_d \geq \frac{1}{2}\sqrt{\frac{1}{2}+\sqrt{3}-\frac{1}{2}\sqrt{4\sqrt{3}-3}}\approx 0.557$ for any $d\geq 4$.}

In Section~\ref{se:maximize_beta} we study the problem of computing, for a given
voter set~$V$ of $n$ points in~$\Reals^d$, a $\beta$-plurality point for the
largest possible~$\beta$.
(Here we assume $d$ to be a fixed constant.)
While such a point can be found in polynomial time,
the resulting running time is quite high. We therefore focus our attention
on finding an approximately optimal point~$p$, that is, a point~$p$ such that
$\beta(p,V) \geq (1-\eps)\cdot\beta(V)$. We show that such a point can be
computed in $O(\frac{n^2}{\eps^{3d-2}} \cdot \log \frac{n}{\eps^{d-1}} \cdot \log^2 \frac {1}{\eps})$ time.

\subparagraph{Notation.}
We denote the \emph{open} ball of radius $\rho$ centered at a point $q\in \Reals^d$
by $B(q,\rho)$ and, for a point $p\in\Reals^d$ and a voter $v$, we define
$D_{\beta}(p,v) \mydef B(v,\beta \cdot |pv|)$.
Observe that $p$ wins~$v$ against a competitor~$q$ if and only if $q$ is strictly outside~$D_\beta(p,v)$,
while $q$ wins~$v$ if and only if $q$ is strictly inside~$D_\beta(p,v)$.
Hence, $V[p \prec_\beta q] = \{v \in V \colon q \in D_{\beta}(p,v) \}$.
We define $\D_{\beta}(p) \mydef \{D_{\beta}(p,v) : v \in V\}$---here
we assume $V$ is clear from the context---and
let $\A(\D_{\beta}(p))$ denote the arrangement induced by $\D_{\beta}(p)$.
The competitor point $q$ that wins the most voters against $p$ will thus
lie in the cell of $\A(\D_{\beta}(p))$ of the greatest depth or, more precisely,
the cell contained in the maximum number of disks~$D_{\beta}(p,v)$.

\section{Bounds on $\beta^*_d$}
\label{se:universal}
In this section we will prove bounds on~$\beta^*_d$, the supremum
of all~$\beta$ such that any finite set $V\subset \Reals^d$ admits a $\beta$-plurality point. We start with an observation that allows us to apply bounds on~$\beta^*_d$
to those on~$\beta^*_{d'}$ for $d'>d$.  Let $\conv(V)$ denote the convex hull of~$V$.
\begin{observation}\label{obs:dimension-monotonicity}
Let $V$ be a finite multiset of voters in $\Reals^d$.
\begin{enumerate}
\item[(i)] Suppose a point $p\in \Reals^d$ is not a $\beta$-plurality point for $V$. Then there is a point $q\in\conv(V)$
such that $\left |\strut V[p\succ_\beta q] \right | < \left|\strut V[p\isbeatenby q]\right|$.
\item[(ii)]  For any $p'\not\in\conv(V)$, there is a point $p\in \conv(V)$ with $\beta(p,V)>\beta(p',V)$.
\item[(iii)] For any $d'>d$ we have $\beta_{d'}^* \leq \beta_{d}^*$.
\end{enumerate}
\end{observation}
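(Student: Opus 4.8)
The plan is to derive all three parts from one elementary fact: orthogonal projection onto a convex set is $1$-Lipschitz and fixes that set pointwise, so replacing a point $p$ by a point $\widehat p$ with $|\widehat p\,v|\le|pv|$ for every voter $v$ can only help $p$ and only hurt every competitor. Concretely, I would first record this ``engine'': if $|\widehat p\,v|\le|pv|$ for all $v\in V$, then $V[p\succ_\beta q]\subseteq V[\widehat p\succ_\beta q]$ and $V[\widehat p\prec_\beta q]\subseteq V[p\prec_\beta q]$ for every $q$ and every $\beta$, simply because $\beta|\widehat p\,v|\le\beta|pv|$; symmetrically, if $|q'v|\le|qv|$ for all $v$, then $V[p\succ_\beta q']\subseteq V[p\succ_\beta q]$ and $V[p\prec_\beta q]\subseteq V[p\prec_\beta q']$.

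For (i) I would take $q'$ to be the point of $\conv(V)$ nearest to $q$. Since the nearest-point map onto $\conv(V)$ is $1$-Lipschitz and fixes every $v\in V\subseteq\conv(V)$, we get $|q'v|\le|qv|$ for all $v$, and the engine yields $|V[p\succ_\beta q']|\le|V[p\succ_\beta q]|<|V[p\prec_\beta q]|\le|V[p\prec_\beta q']|$, so $q'\in\conv(V)$ is the required witness.

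For (ii) I would let $p$ be the point of $\conv(V)$ nearest to $p'$. A supporting hyperplane of $\conv(V)$ through $p$ with outward normal $p'-p$ separates $p'$ from every $v\in V$; expanding $|p'v|^2=|p'p|^2+|pv|^2+2\langle p'-p,\,p-v\rangle$ then shows $|pv|<|p'v|$ for every $v$, since the inner product is nonnegative and $|p'p|>0$. The engine gives $\beta(p,V)\ge\beta(p',V)$ at once. To make this strict I would exploit that $V$ is finite, so $c\mydef\min_{v\in V}|p'v|/|pv|>1$ (with the convention $t/0=+\infty$), whence a $\beta''$-plurality point $p'$ makes $p$ a $\min(1,c\beta'')$-plurality point for every $\beta''<\beta(p',V)$; combined with $\beta(p',V)<1$, which holds because $p'\notin\conv(V)$ (test $q=p$: once $\beta$ exceeds $\max_v|pv|/|p'v|<1$, the point $p$ beats $p'$ on all of $V$), this gives $\beta(p,V)\ge\min\!\bigl(1,c\,\beta(p',V)\bigr)>\beta(p',V)$. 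I expect the genuinely fiddly point to be the degenerate multisets where $p'$ is a $\beta$-plurality point for no value of $\beta$ (which occurs only when $V$ is concentrated at a single point); those I would handle by instead choosing $p$ to be a point of $V$ that is a median in every direction.

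For (iii) I would fix a small $\eps>0$ (noting $\beta^*_{d'}\le1$, so I may assume $\beta^*_d<1$ and $\eps<1-\beta^*_d$) and, by definition of $\beta^*_d$, pick a finite $V\subset\Reals^d$ admitting no $(\beta^*_d+\eps)$-plurality point, so $\beta(V)\le\beta^*_d+\eps$. Regard $V$ as a multiset in $\Reals^{d'}$ via the coordinate inclusion $\Reals^d\hookrightarrow\Reals^{d'}$. If $p\in\Reals^{d'}$ were a $\beta$-plurality point for this copy, let $\widehat p\in\Reals^d$ be its orthogonal projection; then $|\widehat p\,v|\le|pv|$ for all $v\in V$, and restricting the competitors to $q\in\Reals^d$, the engine gives $|V[\widehat p\succ_\beta q]|\ge|V[p\succ_\beta q]|\ge|V[p\prec_\beta q]|\ge|V[\widehat p\prec_\beta q]|$ for every such $q$, so $\widehat p$ is a $\beta$-plurality point for $V$ in $\Reals^d$; hence $\beta\le\beta(V)\le\beta^*_d+\eps$. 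Therefore $\beta^*_{d'}\le\beta(V\text{ in }\Reals^{d'})\le\beta^*_d+\eps$, and letting $\eps\to0$ proves $\beta^*_{d'}\le\beta^*_d$. Overall, the only delicate step is the strictness in (ii) together with its degenerate cases; (i) and (iii) are routine once the projection estimate is in place.
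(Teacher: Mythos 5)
Your proof is correct and follows essentially the same route as the paper: the single engine is that the nearest-point projection onto $\conv(V)$ is (strictly) closer to every voter, which simultaneously handles the competitor in (i), the candidate in (ii), and the embedded copy in (iii). Your extra care in (ii) (the factor $c>1$ together with $\beta(p',V)<1$) and in (iii) (taking $\beta(V)\leq\beta_d^*+\eps$ and letting $\eps\to 0$ rather than assuming an extremal $V$ with $\beta(V)=\beta_d^*$ exists) goes beyond the paper's terse argument; the only slip is the parenthetical claim that the degenerate case arises only when $V$ is concentrated at a single point --- it in fact arises whenever a strict majority of the voters coincide at some location $w\neq p'$ --- but your proposed fallback (take $p=w$, which lies in $\conv(V)$ and is a $1$-plurality point) covers that broader case as well.
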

\begin{proof}
Note that for every point~$r\not\in\conv(V)$ there is a point $r'\in\conv(V)$ that lies
strictly closer to all voters in~$V$, namely the point $r'\in\bd\conv(V)$ closest to~$r'$.
This immediately implies part~(i): if $p$ is beaten by some point~$q\not\in\conv(V)$
then $p$ is certainly beaten by a point $q'\in\conv(V)$ that lies strictly closer to
all voters in~$V$ than~$q$. It also immediately implies part~(ii), because if a point $p$ lies strictly
closer to all voters in $V$ than a point~$p'$, then~$\beta(p,V)>\beta(p',V)$.

To prove part~(iii), let $V\in\Reals^d$ be a voter set such that $\beta(V) = \beta^*_d$.
Now embed $V$ into~$\Reals^{d'}$, say in the flat $x_{d+1}=\cdots=x_{d'}=0$,
obtaining a set $V'$. Then $\beta(V')=\beta(V)$ by parts (i) and (ii).
Hence, $\beta_{d'}^* \leq \beta(V')=\beta(V) = \beta_{d}^*$.
\end{proof}

We can now prove an upper bound on $\beta_d^*$.
\begin{lemma}\label{lem:beta-star-2d-ub}
$\beta_d^* \leq \sqrt{3}/2$, for $d\geq 2$.
\end{lemma}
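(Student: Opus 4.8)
The plan is to exhibit a single voter set in $\Reals^2$ that admits no $\beta$-plurality point for any $\beta > \sqrt{3}/2$; Observation~\ref{obs:dimension-monotonicity}(iii) then propagates the bound to every $d \ge 2$. I would take $V = \{v_1,v_2,v_3\}$ to be the vertices of an equilateral triangle with circumradius $R$, so that each side has length $\sqrt{3}\,R$, and let $c$ denote its center. By Observation~\ref{obs:dimension-monotonicity}(ii) it suffices to rule out candidate points inside $\conv(V)$, so fix $p \in \conv(V)$ and $\beta$ with $\sqrt{3}/2 < \beta \le 1$.

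The geometric heart of the argument is the claim that for every $p \in \conv(V)$ the two largest of the distances $|pv_1|, |pv_2|, |pv_3|$ sum to at least $2R$. I would derive this from two elementary facts about the equilateral triangle. First, $p \mapsto \sum_i |pv_i|$ is convex and invariant under the $120^\circ$ rotation about $c$, so its minimum is attained at $c$ — averaging the three rotated copies of any minimizer yields $c$, which is then also a minimizer by convexity — giving $\sum_i |pv_i| \ge 3R$ for all $p$. Second, $\conv(V)$ is covered by the three closed disks $\ob{B}(v_i, R)$, since the point of $\conv(V)$ farthest from all three vertices is $c$, at distance exactly $R$; hence $\min_i |pv_i| \le R$ for $p \in \conv(V)$. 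Subtracting, the sum of the two largest distances equals $\sum_i |pv_i| - \min_i |pv_i| \ge 3R - R = 2R$.

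With this in hand, let $v_i, v_j$ be the two vertices farthest from $p$; then $\beta\bigl(|pv_i| + |pv_j|\bigr) \ge 2\beta R > \sqrt{3}\,R = |v_iv_j|$. So the two open disks $D_\beta(p,v_i)$ and $D_\beta(p,v_j)$ have radii summing to strictly more than the distance between their centers, hence they intersect (for instance in a point of the segment $v_iv_j$); pick any $q$ in the intersection. Then $v_i, v_j \in V[p\isbeatenby q]$, so $|V[p\isbeatenby q]| \ge 2$, whereas $|V[p\beats q]| \le 1$ because only the third voter is left. Thus $p$ is not a $\beta$-plurality point. Since this holds for all $p \in \conv(V)$ and all $\beta > \sqrt{3}/2$, we get $\beta(V) \le \sqrt{3}/2$, hence $\beta^*_2 \le \sqrt{3}/2$, and Observation~\ref{obs:dimension-monotonicity}(iii) gives $\beta^*_d \le \sqrt{3}/2$ for all $d \ge 2$.

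The main thing to get right is the pair of inequalities for $\conv(V)$ in the middle paragraph; in particular one must notice that the bounds $\sum_i |pv_i| \ge 3R$ and $\min_i |pv_i| \le R$ already combine to give the needed $2R$, even though each is tight only at $c$. The rest is routine: two open disks whose radii sum exceeds the distance between their centers always meet in a nonempty open set, so the competitor $q$ exists, and the voter-count bookkeeping with only three voters is immediate.
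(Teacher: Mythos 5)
Your proof is correct, and it uses the same witness configuration (three voters at the vertices of an equilateral triangle) and the same final mechanism as the paper: identify two voters $v_i,v_j$ with $\beta\left(|pv_i|+|pv_j|\right) > |v_iv_j|$, so that the open disks $D_\beta(p,v_i)$ and $D_\beta(p,v_j)$ overlap and any competitor $q$ in the overlap wins two of the three voters against $p$. Where you genuinely diverge is in how the central inequality -- that the two largest of the distances $|pv_1|,|pv_2|,|pv_3|$ sum to at least $2R$ -- is established. The paper fixes the Voronoi cell containing $p$ (so that the two farthest voters are determined) and uses the ellipse with those two voters as foci passing through the center, which is tangent to the Voronoi cell there; this single tangency argument covers every $p$ in the plane and pinpoints the center as the unique point where equality holds. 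You instead invoke Observation~\ref{obs:dimension-monotonicity}(ii) to restrict attention to $p\in\conv(V)$ and then write the quantity as $\sum_i|pv_i| - \min_i|pv_i| \geq 3R - R$, proving the first bound by convexity plus $120^\circ$-rotational symmetry of the Fermat sum and the second by a covering argument. Both routes are sound; yours is arguably more elementary (no conic or tangency reasoning) at the cost of the extra reduction to the convex hull and of combining two separate estimates, each individually tight only at the center. One stylistic note: the paper's proof also verifies that the center itself achieves $\beta(p,V)=\sqrt{3}/2$, which is not needed for this upper bound (and you rightly omit it) but is used to match the lower bound elsewhere.
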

\begin{proof}
  By Observation~\ref{obs:dimension-monotonicity}(iii), it suffices to prove the
  lemma for~$d=2$. To this end let $V = \{v_1,v_2,v_3\}$ consist of three voters
  that form an equilateral triangle~$\Delta$ of side length~2 in~$\Reals^2$;
  see Fig.~\ref{fig:equi-triangle}(i).
  \begin{figure}
  \centering
  \includegraphics{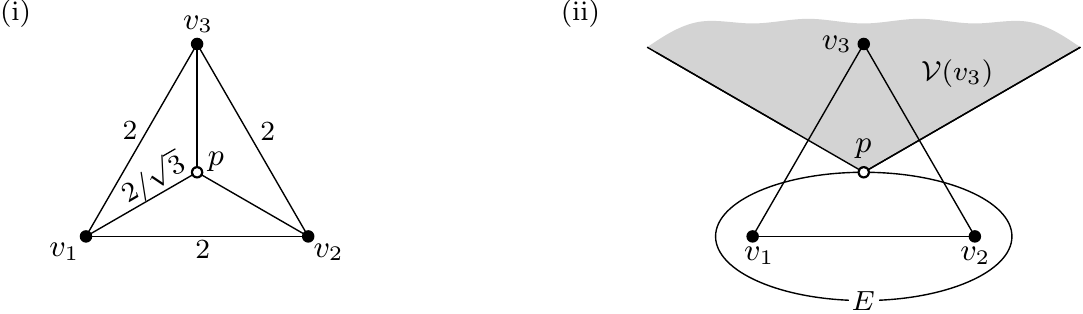}
        \caption{(i) The set $V=\{v_1,v_2,v_3\}$ of voters and the point~$p$
                     used in the proof of Lemma~\ref{lem:beta-star-2d-ub}.
                 (ii) The ellipse $E$ is tangent to the Voronoi cell $\V(v_3)$.}
        \label{fig:equi-triangle}
    \end{figure}

  Let $p$ denote the center of~$\Delta$. We will first argue that $\beta(p,V) = \sqrt{3}/2$.
  Note that $|pv_i|=2/\sqrt{3}$ for all three voters~$v_i$. Hence, for $\beta=\sqrt{3}/2$, the open balls $D_{\beta}(v_i,p)$ are pairwise disjoint and touching at the mid-points of the edges of $\Delta$.
  Therefore any competitor~$q$ either wins one voter and loses the remaining two,
  or wins no voter and loses at least one. The former happens when $q$ lies inside
  one of the three balls~$D_{\beta}(v_i,p)$; the later happens when $q$ does not lie inside any of the balls,
  because in that case $q$ can be on the boundary of at most two of the balls.
  Thus, for $\beta=\sqrt{3}/2$, the point $p$ always wins more voters than~$q$ does.
 On the other hand, for $\beta>\sqrt{3}/2$, any two balls
  $D_{\beta}(v_i,p)$, $D_{\beta}(v_j,p)$ intersect and so
  a point $q$ located in such a pairwise intersection wins two voters and beats~$p$.
  We conclude that $\beta(p,V) = \sqrt{3}/2$, as claimed.

  The lemma now follows if we can show that $\beta(p',V) \leq \sqrt{3}/2$ for any $p'\neq p$.
  Let $\VD(V)$ be the Voronoi diagram of $V$, and let $\V(v_i)$ be the closed
  Voronoi cell of~$v_i$, as shown in Fig.~\ref{fig:equi-triangle}(ii).
  Assume without loss of generality that $p'$ lies in $\V(v_3)$. Let $E$ be the ellipse
  with foci $v_1$ and~$v_2$ that passes through~$p$. Thus
  \[
  E \mydef \{ z \in \Reals^2 : |zv_1| + |zv_2| = 4/\sqrt{3} \}.
  \]
  Note that $E$ is tangent to $\V(v_3)$ at the point $p$. Hence, any point $p'\neq p$ in $\V(v_3)$ has
  $|p'v_1| + |p'v_2| > 4/\sqrt{3}$. This implies that for $\beta\geq \sqrt{3}/2$
  we have $\beta\cdot|p'v_1| + \beta\cdot|p'v_2| > 2$, and so the
  disks $D_{\beta}(p',v_1)$ and $D_{\beta}(p',v_2)$ intersect.
  It follows that for $\beta\geq \sqrt{3}/2$ there is a competitor~$q$ that
  wins two voters against~$p'$, which implies $\beta(p',V) \leq \sqrt{3}/2$
and thus finishes the proof of the lemma.
\end{proof}
We now prove lower bounds on $\beta_d^*$. We first prove that $\beta_d^*\geq 1/\sqrt{d}$
for any $d\geq 2$, and then we improve the lower bound to $\sqrt{3}/2$ for $d=2$.
The latter bound is tight by Lemma~\ref{lem:beta-star-2d-ub}.

\medskip

Let $V$ be a finite multiset of $n$ voters in $\Reals^d$. We call a hyperplane~$h$
\emph{balanced} with respect to $V$, if both open half-spaces defined by $h$ contain at most
$n/2$~voters from~$V$. Note the difference with median hyperplanes, which are required to
have fewer than $n/2$ voters in both open half-spaces.
Clearly, for any~$1\leq i\leq d$ there is a balanced hyperplane orthogonal to the $x_i$-axis,
namely the hyperplane $x_i=m_i$, where $m_i$ is a median in the multiset of all $x_i$-coordinates
of the voters in~$V$. (In fact, for any direction~$\vec{d}$ there is a balanced hyperplane
orthogonal to~$\vec{d}$.)

\begin{lemma}\label{lem:beta-star-2d-lb}
Let $d\geq 2$.
For any finite multi-set $V$ of voters in $\Reals^d$
there exists a point~$p\in\Reals^d$ such that $\beta(p,V) = 1/\sqrt{d}$.
Moreover, such a point~$p$ can be computed in $O(n)$ time.
\end{lemma}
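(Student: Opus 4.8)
The plan is to take $p$ to be the \emph{coordinate-wise median} of $V$: after translating $\Reals^d$ so that for every $i$ the hyperplane $x_i=0$ is balanced with respect to $V$ (e.g.\ $0$ is a median of the $i$-th coordinates of the voters), we let $p$ be the origin. Since a median of $n$ numbers is found by linear-time selection and $d$ is a fixed constant, $p$ is computed in $O(n)$ time, so the whole content is to show that $p$ is a $(1/\sqrt d)$-plurality point; this already yields $\beta(p,V)\geq 1/\sqrt d$ (the interesting direction of the stated bound).

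To that end, fix a competitor $q\neq p$ (the case $q=p$ being trivial) and set $\beta\mydef 1/\sqrt d$. Writing out the condition $\beta\cdot|pv|>|qv|$ with $p$ the origin, $\beta^2=1/d$, and completing the square, one sees that the voters $p$ loses against $q$ are exactly those strictly inside an open ball $B(c,r)$, the voters $p$ ties are those on $\bd B(c,r)$, and the voters $p$ wins are those strictly outside $\overline{B(c,r)}$, where
\[
c\mydef \tfrac{d}{d-1}\,q \qquad\text{and}\qquad r\mydef \tfrac{\sqrt d}{d-1}\,|q| .
\]
The key feature of this ball is the identity $|pc|=\sqrt d\cdot r$; hence $\sum_{i=1}^d c_i^2=|c|^2=d\,r^2$, so some coordinate $i$ has $c_i^2\geq r^2$, and by reflecting that coordinate (an isometry preserving all relevant structure) we may assume $c_i\geq r>0$. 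Let $L$, $T$, $W$ denote the multisets of voters lost, tied, and won by $p$ against $q$; it remains to prove $|W|\geq|L|$, equivalently $2|L|+|T|\leq n$.

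If $c_i>r$, then $\overline{B(c,r)}$ lies entirely in the open half-space $\{x_i>0\}$, which contains at most $n/2$ voters because $x_i=0$ is balanced; thus $|L|+|T|\leq n/2$ and $|W|=n-|L|-|T|\geq n/2\geq|L|$. The delicate case is $c_i^2=r^2$: then, since $\sum_j c_j^2=d\,r^2$ and $c_j^2\leq c_i^2=r^2$ for every $j$, \emph{all} $c_j^2$ equal $r^2$, and after reflecting coordinates $c=(r,\dots,r)$. Now $B(c,r)$ sits inside the open positive orthant, but $\overline{B(c,r)}$ meets every coordinate hyperplane $\{x_j=0\}$ — precisely in the single point $c-r\,e_j$. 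Hence a tied voter with $j$-th coordinate $0$ must lie at $c-r\,e_j$; letting $m_j$ be the number of voters there, the $d$ points $c-r\,e_j$ are pairwise distinct, so $\sum_j m_j\leq|T|$, and because $d\geq2$ some index $j$ has $m_j\leq|T|/2$. For that $j$, every voter of $L$ and every voter of $T$ except those at $c-r\,e_j$ has positive $j$-th coordinate, so $|L|+(|T|-m_j)\leq n/2$; rearranging and using $2m_j\leq|T|$ gives $2|L|\leq n-|T|$, i.e.\ $|W|\geq|L|$. This completes the argument for all $q$, so $p$ is a $(1/\sqrt d)$-plurality point.

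The main obstacle is exactly this degenerate ``corner'' configuration $c=(r,\dots,r)$: there $\overline{B(c,r)}$ is no longer confined to one open coordinate half-space (its boundary touches every hyperplane $x_j=0$), so one cannot simply read off a bound from a single balanced hyperplane, and the tied voters on those hyperplanes must be accounted for; the pigeonhole argument over the $d$ distinct tangency points — which is where $d\geq2$ enters — is what makes it work. Everything else (identifying the ball $B(c,r)$, the identity $|pc|=\sqrt d\,r$, and the non-degenerate case) is routine.
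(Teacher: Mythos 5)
Your proof is correct, and it takes a genuinely different route from the paper's, although it starts from the same point $p$ (the coordinate-wise median, i.e., the common point of $d$ balanced coordinate hyperplanes, computed by linear-time selection). The paper fixes $\beta<1/\sqrt d$ strictly, places the competitor $q$ in a cone $C_d^+=\{x_d\geq|x_j|\text{ for all }j\}$, and proves via an angle bound $\sin(\angle qpv)\geq 1/\sqrt d$ plus the Law of Sines that $p$ strictly wins \emph{every} voter in the closed halfspace $\mycl(h_d^-)$, which contains at least $n/2$ voters; the conclusion $\beta(p,V)\geq 1/\sqrt d$ then follows from the supremum definition. You instead work directly at the critical value $\beta=1/\sqrt d$, identify the locus of lost voters as an open Apollonius-type ball $B(c,r)$ with $c=\frac{d}{d-1}q$, $r=\frac{\sqrt d}{d-1}|q|$, and exploit the identity $|pc|=\sqrt d\,r$ to push $\ob{B(c,r)}$ into a single coordinate halfspace; note that your choice of coordinate ($c_i^2\geq r^2$, i.e., $d\,q_i^2\geq |q|^2$) selects the same halfspace as the paper's cone argument, so the two proofs differ only in how they certify that the lost voters all lie on one side of a balanced hyperplane. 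What your version buys is a slightly stronger conclusion: the paper only shows $p$ is a $\beta$-plurality point for every $\beta<1/\sqrt d$, whereas your explicit treatment of the tangent configuration $c=(r,\dots,r)$ — where the closed ball touches every hyperplane $x_j=0$ and the tied voters at the $d$ distinct tangency points must be split by pigeonhole — shows that $p$ is a $(1/\sqrt d)$-plurality point at the critical value itself. What the paper's trigonometric formulation buys is reusability: essentially the same Law-of-Sines computation is recycled in Lemma~\ref{lem:beta-star-2d-upper} with three concurrent balanced lines to get the tight planar bound $\sqrt3/2$, which is less transparent in the Apollonius-ball language. All steps of your argument check out, including the reflection reductions (which preserve balancedness of the coordinate hyperplanes) and the claim that a tied voter on $\{x_j=0\}$ must sit at the single tangency point $c-re_j$.
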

\begin{proof}
    Let $\mathcal{H} \mydef \{h_1,\dotsc,h_d\}$ be a set of balanced hyperplanes
    with respect to $V$ such that $h_i$ is orthogonal to the $x_i$-axis, and assume
    without loss of generality that~$h_i$ is the hyperplane~$x_i=0$.
    We will prove that the point~$p$ located at the origin is a $\beta$-plurality point
    for~$V$ for any $\beta<1/\sqrt{d}$, thus showing that $\beta(p,V)\geq 1/\sqrt{d}$.

    Let $q=(q_1,\ldots,q_d)$ be any competitor of~$p$. We can assume without loss of
    generality that $\max_{1\leq i\leq d} |q_i| = q_d > 0$. Thus $q$ lies in the
    closed cone~$C_d^+$ defined as
    \[
    C_d^+ \mydef \{ \ (x_1,\ldots,x_d)\in\Reals^d : \ x_d \geq |x_j| \mbox{ for all $j\neq d$ } \}.
    \]
    Note that~$C_d^+$ is bounded by portions of the $2(d-1)$ hyperplanes $x_d = \pm x_j$ with $j\neq d$; see Fig.~\ref{fig:3D_LB}.

    Because $h_d \colon x_d=0$ is a balanced hyperplane, the open halfspace $h^+_d \colon x_d >0$ contains at most $n/2$ voters, which implies that the closed halfspace $\mycl(h_d^-) \colon x_d \leq 0$ contains at least $n/2$ voters. Hence, it suffices to argue that for any $\beta < 1/\sqrt{d}$ the point~$p$~wins all the voters in $\mycl(h_d^-)$ against~$q$.

\begin{figure}
  \centering
\includegraphics[width=0.4\textwidth]{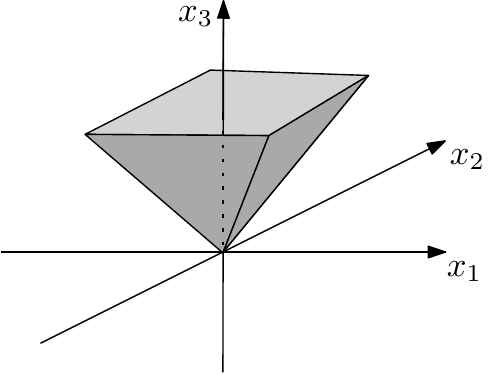}
\caption{The cone $C^+_3$ used in the proof of Lemma~\ref{lem:beta-star-2d-lb}.}
  \label{fig:3D_LB}
\end{figure}

    \begin{claiminproof}
       For any voter $v \in \mycl(h_d^{-})$ with $v\neq p$, we have that
       $\sin \left(\angle qpv\right) \geq  1/\sqrt{d}$ with equality
       if and only if $q$ lies on an edge of $C_d^+$ and $v$ lies on the orthogonal projection of this edge onto~$h_d$.
    \end{claiminproof}
    \begin{proofinproof}
       For any point~$v$ below $h_d$ there is a point~$v'\in h_d$
       with $\angle qpv' < \angle qpv$, namely the orthogonal projection
       of $v$ onto $h_d$. Hence, from now on we assume that $v\in h_d$.

        First, we prove that $\sin (\angle qpv) =  1/\sqrt{d}$ if $q$ lies on an edge~$e$ of~$C_d^+$ and $v$ lies on the orthogonal projection $\overline{e}$~of~$e$ onto~$h_d$.
        Assume without loss of generality that $e$ is the edge of~$C_d^+$ defined by the intersection of the $d-1$ hyperplanes $x_d=x_j$, so that $q_1 = \dots = q_{d-1} = q_d$. Since $\angle qpv$
        is the same for any $v\in \overline{e}$, we may assume that
        $v$ is the orthogonal projection of $q$ to $h_d$, which
        means~$|qv|=q_d$. We then have
        \[
        \sin \left(\angle{qpv}\right)
        = \frac{|qv|}{|pq|}
        = \frac{q_d}{\sqrt{q_1^2 + \dots + q_d^2}}
= \frac{1}{\sqrt{d}}\,.
        \]
        Now assume the condition for equality does not hold. Let
        $\rho$ be the ray starting at $p$ and containing~$q$, and let
        $\overline{\rho}$ be its orthogonal projection onto~$h_d$.
We have two cases:
        $v\in\overline{\rho}$ but $q$ is not contained in an edge of $C_d^+$, or $v\not\in\overline{\rho}$.

        In the former case we may, as before, assume that $v$ is
        the projection of $q$ onto~$h_d$. Since~$q\in C_d^+$
        we have $q_d \geq |q_{j}|$ for all $j$. Moreover, since
        $q$ does not lie on an edge of $C_d^+$ we have
        $q_d > |q_{j*}|$ for at least one $j^*$. Thus
        $|pq| = \sqrt{q_1^2 + \dotsc + q_d^2} < \sqrt{d \cdot q_d^2} = \sqrt{d} \cdot |qv|$,
and $\sin \left(\angle{qpv}\right)  = |qv|/|pq| > 1/\sqrt{d}$.

        In the latter case, let $\ell$ be the line containing $p$ and~$v$,
        and let $v'$ be the point on $\ell$ closest  to~$q$. Then
        $|qv| \geq |qv'| > |q\overline{q}|$, where $\overline{q}$~is the projection of $q$ onto~$h_d$, and so
\renewcommand\qedsymbol{\textcolor{darkgray}{\ensuremath{\vartriangleleft}}}
        \[
          \sin \left( \angle{qpv}\right) \geq \frac{|qv'|}{|pq|} > \frac{|q\overline{q}|}{|pq|} \geq \frac{1}{\sqrt{d}}\,.
          \qedhere
        \]
   \end{proofinproof}

We can now use the Law of Sines and the claim above to derive that
    for any $\beta<1/\sqrt{d}$ and any voter $v\in \mycl(h_d^{-})$ with $v\neq p$ we have
    \[
    \beta \cdot |pv|
    < \frac{1}{\sqrt{d}}\cdot |pv|
    = \frac{1}{\sqrt{d}} \cdot \frac{|qv| \cdot \sin\left( \angle pqv \right)}{\sin\left( \angle qpv \right)}
    \leq |qv| \cdot \sin\left( \angle pqv \right)
    \leq |qv|\,.
    \]
    Hence, $p$ wins every point in $\mycl(h_d^-)$. This proves the first part of the lemma
    since $\mycl(h_d^-)$ contains at least $n/2$ voters, as already remarked.
    \medskip

    Computing the point~$p$ is trivial once we have the balanced hyperplanes~$h_i$, which can be found
    in $O(n)$ time by computing a median $x_i$-coordinate for each $1\leq i\leq d$.
\end{proof}

\mypara{A tight bound in the plane.}
In $\Reals^2$ we can improve the above bound: for any voter set $V$ in the plane we can
find a point~$p$ with $\beta(p,V)\geq \sqrt{3}/2$. By Lemma~\ref{lem:beta-star-2d-ub},
this bound is tight.
The improvement is based on the following lemma.
\begin{figure}
  \centering
\includegraphics[width=0.4\textwidth, page=1]{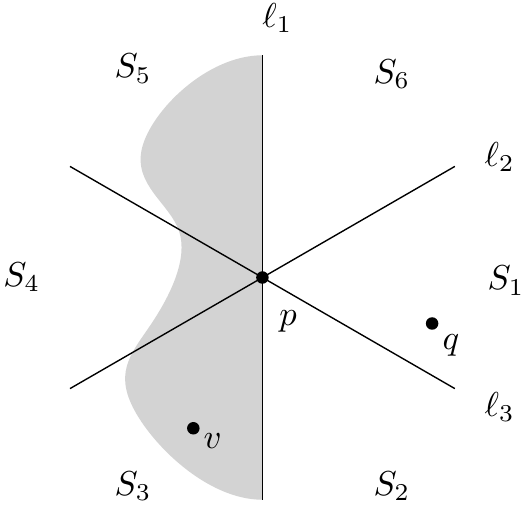}
\caption{The partition used in the proof of Lemma~\ref{lem:beta-star-2d-upper}.  The region $H = \mycl(S_3 \cup S_4 \cup S_5)$ is indicated in grey. }
  \label{fig:planar-partition}
\end{figure}
\begin{lemma}
    \label{lem:beta-star-2d-upper}
    Let $V$ be a multiset of $n$ voters in $\Reals^2$, let $\ell_1,\ell_2,\ell_3$
    be a triple of concurrent balanced lines such that the
    smaller angle between any two of them is $\frac{\pi}{3}$, and let $p$ be the
    common intersection of~$\ell_1,\ell_2,\ell_3$. Then $\beta(p,V)\geq\sqrt{3}/2$.
\end{lemma}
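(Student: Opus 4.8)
The plan is to set up coordinates so that the three concurrent balanced lines $\ell_1,\ell_2,\ell_3$ make angles $0,\pi/3,2\pi/3$ through the origin $p$, and to partition $\Reals^2\setminus\{p\}$ into six closed sectors $S_0,\dots,S_5$ of angular width $\pi/3$ each, bounded by these lines (as in Fig.~\ref{fig:planar-partition}). Fix an arbitrary competitor $q\neq p$; by rotating and reflecting we may assume $q$ lies in one particular sector, say $S_0$, or on one of its bounding rays. The key observation is that the three lines split the plane into six sectors, and the closed halfplane on the far side of $q$ with respect to each $\ell_i$ — since each $\ell_i$ is balanced — contains at least $n/2$ voters on the boundary-or-far side. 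I want to show that the union $H$ of the three sectors ``opposite'' to $q$ (the grey region in the figure, three consecutive sectors forming a closed halfplane-like region of total angle $\pi$, namely $S_2\cup S_3\cup S_4$ when $q\in S_0$) is won entirely by $p$ against $q$ for $\beta=\sqrt3/2$, and that $H$ contains at least $n/2$ voters.

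For the voter-count, the point is that $H$ is a closed halfplane bounded by $\ell_2$ (the line through $p$ separating $\{S_2,S_3,S_4\}$ from $\{S_5,S_0,S_1\}$) — wait, more carefully: $H$ should be chosen as the closed halfplane, bounded by whichever $\ell_i$ is ``most perpendicular'' to $pq$, lying on the opposite side from $q$. Since that $\ell_i$ is balanced, its closed halfplane not containing $q$'s open side has at least $n/2$ voters. So the counting is immediate from balancedness of the $\ell_i$. The crux is the geometric claim: for every point $v$ in this opposite closed halfplane $H$, we have $\beta\cdot|pv|<|qv|$, i.e. $v$ lies strictly outside $D_\beta(p,v)=B(v,\beta|pv|)$ — equivalently $q\notin D_\beta(p,v)$. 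By the Law of Sines in triangle $pqv$ this reduces, exactly as in the proof of Lemma~\ref{lem:beta-star-2d-lb}, to bounding the angle $\angle qpv$ from below: we need $\sin(\angle qpv)\geq \sqrt3/2$ for all $v\in H\setminus\{p\}$, i.e. $\angle qpv\geq \pi/3$ (it cannot exceed $2\pi/3$ in the relevant configuration, or if it does the sine only gets larger until $\angle qpv=\pi$... actually I must be careful: $\sin$ is not monotone on $[0,\pi]$, so I really want $\angle qpv\in[\pi/3,2\pi/3]$). This is where the choice of $H$ as the sector-union opposite to $q$, combined with $q$'s sector having width only $\pi/3$, pays off: if $q$ lies in the cone between directions $0$ and $\pi/3$ and $v$ lies in the closed halfplane on the far side, the angle $\angle qpv$ is squeezed into $[\pi/3,2\pi/3]$ by elementary angle-chasing. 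When $\angle qpv=\pi/3$ exactly, equality $\beta|pv|=|qv|$ would hold, but we only need a $\beta$-plurality point, i.e. strict inequality is not required on the boundary — we need $|V[p\beats q]|\ge|V[p\isbeatenby q]|$, and voters with $\beta|pv|=|qv|$ contribute to neither side, so it suffices that no voter of $H$ lies strictly inside $D_\beta(p,v)$; I should double-check the strict/non-strict bookkeeping but it works out because the ``bad'' boundary case contributes nothing to either count.

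Concretely the steps are: (1) normalize coordinates; (2) fix $q$, use symmetry to place it in $S_0$ (closure); (3) identify the line $\ell_i$ among $\ell_1,\ell_2,\ell_3$ whose normal direction is closest to the direction $\vec{pq}$, and let $H$ be the closed halfplane bounded by $\ell_i$ and not containing the open side where $q$ sits; invoke balancedness of $\ell_i$ to get $|V\cap H|\geq n/2$; (4) prove the angle bound $\angle qpv\in[\pi/3,2\pi/3]$ for every $v\in H\setminus\{p\}$, by noting $\vec{pv}$ ranges over a closed halfcircle of directions and $\vec{pq}$ is within $\pi/6$ of the normal to $\ell_i$; (5) conclude via Law of Sines, as in Lemma~\ref{lem:beta-star-2d-lb}, that $\beta\cdot|pv|\le|qv|$ for $\beta=\sqrt3/2$ — hence every $v\in H\setminus\{p\}$ is won by $p$ against $q$ (and $v=p$, if present, is won trivially since $|pp|=0<|qp|$); (6) therefore $|V[p\beats q]|\geq|V\cap H|\geq n/2\geq|V|-|V\cap H|\geq|V[p\isbeatenby q]|$, so $p$ is a $\sqrt3/2$-plurality point, giving $\beta(p,V)\geq\sqrt3/2$.

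The main obstacle I anticipate is step (4) — getting the angle $\angle qpv$ pinned into the window $[\pi/3,2\pi/3]$ uniformly. The subtlety is exactly that $\sin$ is non-monotone, so it is not enough to say ``$v$ is far from $q$ angularly''; I genuinely need a two-sided bound, and that forces me to pick $H$ not as an arbitrary halfplane opposite $q$ but as the specific one whose bounding line is within $30^\circ$ of perpendicular to $pq$ — which exists precisely because three lines at $60^\circ$ spacing means their normals are also spaced $60^\circ$ apart, so some normal is within $30^\circ$ of any given direction. A secondary fussy point is handling the boundary/equality cases (voters exactly on $\ell_i$, the case $\angle qpv=\pi/3$, coincident voters at $p$, and $q$ on a sector boundary) so that the counting inequality is not broken; these are routine but must be stated carefully given that $V$ is a multiset and we only want a weak ($\ge$) plurality point.
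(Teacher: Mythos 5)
Your overall plan is the paper's: pick the closed halfplane $H$ of three consecutive sectors opposite $q$, bounded by the balanced line whose normal is within $\pi/6$ of $\vec{pq}$, get $|V\cap H|\ge n/2$ from balancedness, and finish with the Law of Sines. But step~(4) contains a genuine gap: the claimed two-sided bound $\angle qpv\in[\pi/3,2\pi/3]$ for all $v\in H\setminus\{p\}$ is false, and no choice of closed halfplane through $p$ can make it true. The directions $\vec{pv}$ for $v\in H$ sweep out a full closed half-circle, which (whenever $\vec{pq}$ points into the complement of $H$) contains the direction $-\vec{pq}$; a voter there gives $\angle qpv=\pi$, and more generally the angle ranges over all of $[\pi/3,\pi]$. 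Your own mechanism only yields the lower bound: directions in $H$ make angle at least $\pi/2$ with the outward normal, and $\vec{pq}$ is within $\pi/6$ of that normal, so $\angle qpv\ge\pi/2-\pi/6=\pi/3$ --- nothing caps it at $2\pi/3$. Since your argument for $\beta|pv|\le|qv|$ rests on $\sin\angle qpv\ge\sqrt3/2$, it simply does not apply to voters with $\angle qpv>2\pi/3$, e.g.\ a voter collinear with $q$ and $p$ on the far side of $p$.

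The repair is that the one-sided bound $\angle qpv\ge\pi/3$ already suffices, which is exactly what the paper uses (tersely). What you need from the Law of Sines is $\sin\angle pqv\le\tfrac{2}{\sqrt3}\sin\angle qpv$. If $\angle qpv\in[\pi/3,2\pi/3]$ this follows from $\sin\angle qpv\ge\sqrt3/2$ and $\sin\angle pqv\le1$, as you say; if instead $\angle qpv\in(2\pi/3,\pi)$, the angle sum of triangle $pqv$ forces $\angle pqv<\pi-\angle qpv<\pi/3$, and since both $\angle pqv$ and $\pi-\angle qpv$ lie in $(0,\pi/2)$ where sine is increasing, $\sin\angle pqv\le\sin(\pi-\angle qpv)=\sin\angle qpv$, so the ratio is at most $1<2/\sqrt3$. (Equivalently, and more geometrically: for $\beta=\sqrt3/2$ the open disk $B(v,\beta|pv|)$ is contained in the open cone with apex $p$, axis $\vec{pv}$, and half-angle $\arcsin\beta=\pi/3$, so any $q$ with $\angle qpv\ge\pi/3$ lies outside it and hence is not won by $q$.) With that substitution for step~(4)--(5), and with the tie-handling done as you suggest by proving strict wins for every $\beta<\sqrt3/2$ and passing to the supremum, the rest of your argument --- the choice of $H$, the counting in step~(6), and the degenerate cases --- is correct and matches the paper.
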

   \begin{proof}
    Let $q$ be a competitor of~$p$. The three lines $\ell_1,\ell_2,\ell_3$ partition
    the plane into six equal-sized sectors,  which we number $S_1$ through $S_6$ in a
    clockwise fashion, so that $q$ lies in the closure of $S_1$; see Figure~\ref{fig:planar-partition}.
Let $H$ be the closure of $S_3 \cup S_4 \cup S_5$. It is a closed halfspace bounded
    by a balanced line, so it contains at least half the voters.

    Using an analysis similar to that in the proof of Lemma~\ref{lem:beta-star-2d-lb}, we can show
    that $p$ does not lose any voter $v \in H$. Indeed, using the Law of Sines we obtain
    \[
    \frac{\sqrt{3}}{2} \cdot |pv|
    =  \frac{\sqrt{3}}{2} \cdot \frac{\sin{\angle{pqv}}}{\sin{\angle{qpv}}} \cdot |qv|
    \leq  |qv|, \quad \text{since} \quad \angle{qpv} \geq \pi/3,
\]
    which shows that $p$ is a $\beta$-plurality point for any $\beta<\sqrt{3}/2$.
    Hence, $\beta(p,V)\geq\sqrt{3}/2$.
\end{proof}
The main question is whether a triple of concurrent lines as in Lemma~\ref{lem:beta-star-2d-upper}
always exists. The next lemma shows that this is indeed the case.
The lemma---in fact a
stronger version, stating that any two opposite cones defined by the three concurrent lines
contain the same number of points---has been proved for even~$n$ by Dumitrescu~\etal~\cite{dpt-dhcnl-12}.
Our proof of Lemma~\ref{coll-bisector} is similar to their proof.
We give it because we also need it for odd~$n$, and because we will need an understanding of the proof
to describe our algorithm for computing the concurrent triple in the lemma.
Our algorithm will run in $O(n \log n)$ time, a significant improvement
over the $O(n^{4/3} \log^{1+\eps} n)$ running time obtained
(for the case of even~$n$) by Dumitrescu~\etal~\cite{dpt-dhcnl-12}.
\begin{lemma} \label{coll-bisector}
    For any multiset $V$ of $n$ voters in $\Reals^2$, there exists a triple of concurrent
    balanced lines $(\ell_1,\ell_2,\ell_3)$ such that the
    smaller angle between any two of them is $\frac{\pi}{3}$.
\end{lemma}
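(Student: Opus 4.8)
The goal is to find a point $p$ and three lines through $p$, pairwise at angle $\pi/3$, each of which is balanced with respect to $V$. Fix once and for all the three directions $\theta, \theta + \pi/3, \theta + 2\pi/3$; as $\theta$ varies over $[0,\pi/3)$ these are the only candidate direction-triples up to relabeling. For a fixed $\theta$, there is a balanced line $\ell_1(\theta)$ orthogonal to direction $\theta$ — indeed a whole closed slab of them, since a line is balanced iff it separates the coordinate (in direction $\theta$) into two halves each of size $\le n/2$; when no voter sits exactly at the median this slab degenerates to a single line. The plan is a continuity/intermediate-value argument: follow the three balanced lines as $\theta$ runs over an interval, track the signed "discrepancy" of how far they are from being concurrent, and show it must vanish somewhere.

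\medskip

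\textbf{Key steps.} First I would set up the parametrization carefully. For each of the three directions pick a canonical balanced line, e.g. the median line (break ties consistently, say take the infimum of the balanced slab); call them $\ell_1(\theta), \ell_2(\theta), \ell_3(\theta)$. Generically these three lines bound a small triangle; record its signed area, or better, record the signed distance from $\ell_3(\theta)$ to the intersection point $\ell_1(\theta)\cap\ell_2(\theta)$, measured along the normal of $\ell_3$. Call this function $f(\theta)$. The three lines are concurrent exactly when $f(\theta)=0$. Second, I would exploit a symmetry: rotating the whole configuration of directions by $\pi/3$ permutes $(\ell_1,\ell_2,\ell_3)$ cyclically, so $f$ and its analogues for the other two "which line plays the role of $\ell_3$" choices are related; comparing $f(\theta)$ with the corresponding quantity at $\theta+\pi/3$ (where the roles cycle) should show that $f$ changes sign, or that some natural continuous surrogate does. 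Third, handle continuity: the median line moves continuously in $\theta$ except possibly at directions where the combinatorial median changes — but there the balanced slab has positive width and the two one-sided limits both give balanced lines, so one can choose within the slab to restore continuity (or argue directly that a sign change across such a jump still forces a concurrent balanced triple, picking the line freely inside the slab). Fourth, once $f$ (suitably made continuous, possibly multi-valued but with a consistent sign) takes both signs, the intermediate value theorem yields $\theta_0$ with a concurrent balanced triple, completing the proof.

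\medskip

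\textbf{Main obstacle.} The delicate point is the handling of odd $n$ and of degenerate directions — precisely the cases the paper flags as the reason it reproves the Dumitrescu \etal\ result. When $n$ is odd the median is unique and the balanced line is a genuine function of $\theta$, but the IVT surrogate $f$ may only take the value $0$ in a limiting sense, or the sign-change may happen exactly at a jump discontinuity caused by a voter crossing a median line; there one must argue that the balanced slab at that critical direction is wide enough to absorb the discrepancy, i.e. that within the (closed) set of balanced lines for the three critical directions one can still pick a concurrent triple. Making the "signed discrepancy" argument rigorous — choosing the right scalar function, proving it is continuous after the slab-widening fix, and verifying the sign change via the $\pi/3$-rotation symmetry — is the technical heart; everything else (existence of balanced lines in each direction, the final concurrency conclusion) is routine.
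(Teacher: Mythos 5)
Your proposal follows essentially the same route as the paper: an intermediate-value argument in $\theta$ over $[0,\pi/3]$, exploiting the fact that at $\theta+\pi/3$ the three balanced lines cycle roles, which forces a sign change in a concurrency-discrepancy function. The paper settles the two points you leave open by (a) deleting one voter when $n$ is even so that the balanced line is unique and varies continuously with $\theta$ (for odd $n$ the relevant order statistic of the rotating projections is automatically continuous, so no slab-widening is needed), and (b) verifying the sign flip concretely via the elementary observation that for three non-concurrent lines at these orientations, $p_{23}(0)$ lies to the left of the directed line $\ell_1(0)$ if and only if $p_{13}(0)=p_{23}(\pi/3)$ lies to the right of $\ell_2(0)=\ell_1(\pi/3)$.
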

\begin{proof}
    Define the \emph{orientation} of a line to be the counterclockwise angle it
    makes with the positive $y$-axis. Recall that for any given orientation~$\theta$
    there exists at least one balanced line with orientation~$\theta$.
    When $n$ is odd this line is unique: it passes through the median of the
    voter set~$V$ when $V$ is projected orthogonally onto a line
    orthogonal to the lines of orientation~$\theta$.
    \begin{figure}
      \centering
      \includegraphics{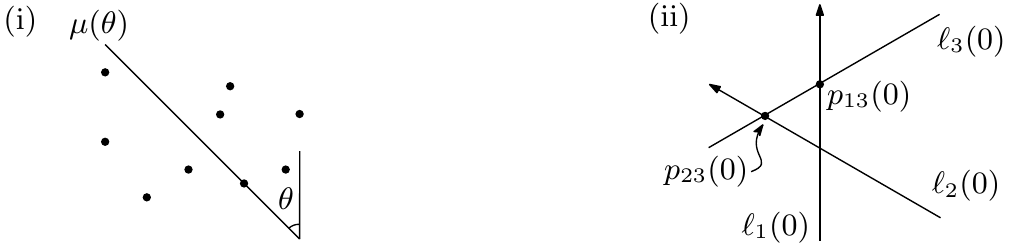}
      \caption{(i) The balanced line~$\mu(\theta)$.
      (ii) If $p_{23}$ is to the left of the directed line~$\ell_1(0)$ then $p_{13}(0)$ is to the right of $\ell_2(0)$.}
\label{fig:bisector}
    \end{figure}
    In the rest of the proof it will be convenient to have a unique balanced line
    for any orientation~$\theta$. To achieve this when $n$ is even, we simply delete an arbitrary
    voter from~$V$. (If there are other voters at the same location, these voters are not deleted.)
    This is allowed because  when $|V|$ is even, a balanced line for $V\setminus\{v\}$ is also
    a balanced line for~$V$.

    Now let $\mu$ be the function that maps an angle value~$\theta$ to the unique
    balanced line~$\mu(\theta)$; see Figure~\ref{fig:bisector}(i).
    Note that $\mu$ is continuous for $0\leq \theta<\pi$.
    Let $\ell_1(\theta) \mydef \mu(\theta)$, and $\ell_2(\theta) \mydef \mu(\theta+\frac{\pi}{3})$, and
    $\ell_3(\theta) \mydef \mu(\theta+\frac{2\pi}{3})$.
    For $i\neq j$, let $p_{ij}(\theta) \mydef \ell_i(\theta) \cap \ell_j(\theta)$
    be the intersection point between~$\ell_i(\theta)$ and~$\ell_j(\theta)$.
    If $p_{23}(0) \in \ell_1(0)$ then the lines $\ell_1(0), \ell_2(0), \ell_3(0)$
    are concurrent and we are done. Otherwise, consider the situation at $\theta=0$
    and imagine $\ell_1(0)$ and $\ell_2(0)$ to be directed in the positive $y$-direction,
    as in Fig.~\ref{fig:bisector}(ii).
    Clearly, if $p_{23}(0)$ is to the left of the directed line~$\ell_1(0)$
    then  $p_{13}(0)$ is to the right of the directed line~$\ell_2(0)$, and vice versa.
    Now increase $\theta$ from~$0$ to~$\pi/3$, and note that $\ell_1(\pi/3)=\ell_2(0)$
    and $p_{23}(\pi/3)=p_{13}(0)$. Hence,~$p_{23}(\theta)$ lies to a different side
    of the directed line $\ell_1(\theta)$ for $\theta=0$ than it does for $\theta=\pi/3$.
    Since both $\ell_1(\theta)$ and $p_{23}(\theta)$ vary continuously with~$\theta$, this
    implies that, for some $\bar{\theta}\in (0,\pi/3)$, the point~$p_{23}(\bar{\theta})$
    crosses the line~$\ell_1(\bar{\theta})$, and so
    the lines~$\ell_1(\bar{\theta}),\ell_2(\bar{\theta}),\ell_3(\bar{\theta})$
    are concurrent.
\end{proof}

The previous two lemmas show that any voter set $V$ in $\Reals^2$ admits a point~$p$
such that $\beta(p,V)\geq\sqrt{3}/2$.
We now show that we can compute such a point in $O(n\log n)$ time, namely, we show
how to compute a triple as in Lemma~\ref{coll-bisector} in $O(n\log n)$ time.
We follow the definitions and notation from the proof of that lemma.
We will assume that $n$ is odd, which, as argued, is without loss of generality.
\medskip

To find a concurrent triple of balanced lines, we first compute the lines
$\ell_1(0), \ell_2(0), \ell_3(0)$ in $O(n)$ time. If they are concurrent,
we are done. Otherwise, there is a $\bar{\theta}\in (0,\pi/3)$ such that
$\ell_1(\bar{\theta}),\ell_2(\bar{\theta}),\ell_3(\bar{\theta})$ are concurrent. To find this
value~$\bar{\theta}$ we dualize the voter set~$V$, using the standard duality transform that
maps a point~$(a,b)$ to the non-vertical line~$y=ax+b$ and vice versa. Let $v^*$ denote the dual line of
the voter~$v$, and let $V^*\mydef\{v^*:v\in V\}$. Note that for~$\theta\in (0,\pi/3)$ the
lines~$\ell_1(\theta),\ell_2(\theta),\ell_3(\theta)$ are all non-vertical, therefore their duals~$\ell^*_i(\theta)$ are well-defined.

Consider the arrangement~$\A(V^*)$ defined by the duals of the voters.
For $\theta\neq 0$, define $\slope(\theta)$ to be the slope of the
lines with orientation~$\theta$. Then $\mu^*(\theta)$, the dual of~$\mu(\theta)$,
is the intersection point of the vertical line~$x=\slope(\theta)$
with $L_{\med}$,  the median level in $\A(V^*)$.
(The median level of $\A(V^*)$ is the set of points~$q$ such that there are fewer than $n/2$~lines
below~$q$ and fewer than $n/2$~lines above~$q$; this is well defined since we assume $n$ is odd. The median level forms an $x$-monotone polygonal curve along edges of $\A(V^*)$.)
Observe that the duals~$\ell_1^*(\theta),\ell_2^*(\theta),\ell_3^*(\theta)$ all lie
on $L_{\med}$. For $\theta\in(0,\pi/3)$,
the $x$-coordinate of $\ell_1^*(\theta)$ lies in~$(-\infty,-1/\sqrt{3})$,
the $x$-coordinate of $\ell_2^*(\theta)$ lies in~$(-1/\sqrt{3},1/\sqrt{3})$,
and the $x$-coordinate of $\ell_3^*(\theta)$ lies in~$(1/\sqrt{3},\infty)$.
\begin{figure}[b]
\centering
\includegraphics{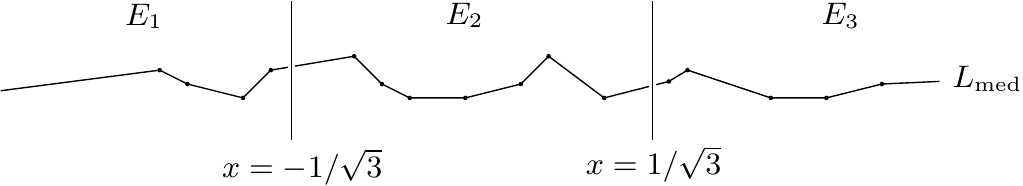}
\caption{The edge sets $E_1$, $E_2$, and $E_3$ of $L_\med$, the median level in $\A(V^*)$.}
\label{fig:dual}
\end{figure}
We split $L_\med$ into three pieces corresponding to these ranges of~the~$x$-coordinate.
Let~$E_1$, $E_2$, and $E_3$ denote the sets of edges forming the parts of $L_\med$
in the first, second, and third range, respectively, where edges crossing the vertical
lines $x=-1/\sqrt{3}$ and $x=1/\sqrt{3}$ are split; see Fig.~\ref{fig:dual}.
Thus, for any~$\theta\in(0,\pi/3)$ and for~$i\in\{1,2,3\}$,
the point~$\ell^*_i(\theta)$ lies on an edge in $E_i$.
\medskip

Recall that we want to find a value~$\bar{\theta}\in(0,\pi/3)$ such that
$\ell_1(\bar{\theta}),\ell_2(\bar{\theta}),\ell_3(\bar{\theta})$
are concurrent.
For $-\infty <x<1/\sqrt{3}$, let $\theta_x$ be such that $\slope(\theta_x)=x$,
and for $i\in\{1,2,3\}$ define $p_i(x) := \ell^*_i(\theta_{x})$.
We are thus looking for a value $\bar{x}\in(-\infty,1/\sqrt{3})$
such that the three points
$p_1(\bar{x}),p_2(\bar{x}),p_3(\bar{x})$ are collinear.

One way to find~$\bar{x}$ would be to first explicitly compute $L_\med$; then we can increase $x$,
starting at $x=-\infty$, and see how the points $p_i(x)$ move over $E_i$, until we reach a value $\bar{x}$
such that $p_1(\bar{x}),p_2(\bar{x}),p_3(\bar{x})$ are collinear.
Since the best known bounds on the complexity of the median level
is~$O(n^{4/3})$~\cite{d-ibpks-98} we will proceed differently, as follows.

\begin{enumerate}
\item \label{alg:find-interval}
    Use the recursive algorithm described below to find an interval $\bar{I}\subseteq \left(-\infty,1/\sqrt{3}\right)$ containing a value $\bar{x}$ with the desired property---namely, that the points $p_1(\bar{x})$, $p_2(\bar{x})$, and $p_3(\bar{x})$ are collinear---and such that, for any $i\in\{1,2,3\}$, the point~$p_i(x)$ lies on the same edge of~$E_i$ for all~$x\in \bar{I}$.
\item \label{alg:finish}
    Find a value $\bar{x}\in \bar{I}$ such that $p_1(\bar{x})$, $p_2(\bar{x})$, and $p_3(\bar{x})$ are collinear. Since for $i=1,2,3$ each $p_i(x)$ lies on a fixed edge~$e_i$
    of~$L_\med$ for all $x\in\bar{I}$ after Step~\ref{alg:find-interval}, this can be done in
    $O(1)$ time. Indeed, if we go back to primal space we are given three (not necessarily distinct) voters $v_1,v_2,v_3$
    (namely, the primals of the lines containing $e_1$, $e_2$, and $e_3$)
    and a range $(\theta,\theta')$ of angles (corresponding to the $x$-range~$\bar{I}$),
    such that the line $\ell_i(\theta)$ passes through $v_i$ for any $\theta\in (\theta,\theta')$.
    We then only have to compute an orientation $\bar{\theta}\in (\theta,\theta')$
    such that the lines $\ell_i(\theta)$ meet in a common point---such an orientation $\bar{\theta}$
    is guaranteed to exist by our construction of~$\bar{I}$.
\end{enumerate}

We now explain the recursive algorithm used in Step~\ref{alg:find-interval}.
In a generic call we are given three trapezoids $\Delta_1,\Delta_2,\Delta_3$
that are each bounded by two non-vertical edges and two vertical edges (one of
which may degenerate into a point).
Let $I_i$ be the $x$-range of $\Delta_i$, for $i\in\{1,2,3\}$; note that this
is well-defined since $\Delta_i$ is a trapezoid delimited by two vertical edges.
The trapezoids $\Delta_1,\Delta_2,\Delta_3$ will have the following properties.
\\
\begin{itemize}
    \item[(P1)] Trapezoid $\Delta_1$ lies inside the vertical slab~$(-\infty,-1/\sqrt{3})\times (-\infty,\infty)$,
          trapezoid~$\Delta_2$ lies inside the vertical slab~$(-1/\sqrt{3},1/\sqrt{3})\times (-\infty,\infty)$, and
          trapezoid~$\Delta_3$ lies inside the vertical slab~$(1/\sqrt{3},\infty)\times (-\infty,\infty)$.
          Moreover, the $x$-ranges $I_1,I_2,I_3$ correspond to each other in the following sense.
          Recall that an $x$-range $I\subset (-\infty,-1/\sqrt{3})$ in the dual plane corresponds to an
          angular interval $(\phi,\phi')$ in the primal plane. We denote by $I\oplus\theta$
          the $x$-range corresponding to the angular interval $(\phi+\theta,\phi'+\theta)$.
          The $x$-ranges $I_1,I_2,I_3$ will be such that $I_2=I_1\oplus\frac{\pi}{3}$ and $I_3=I_1\oplus\frac{2\pi}{3}$.
    \item[(P2)] For any $i\in\{1,2,3\}$, the part of the median level $L_\med$ inside the vertical slab $I_i \times (-\infty,\infty)$
           lies entirely inside $\Delta_i$. Together with the property~(P1) this implies that
           for any $x\in I_1$ we have that $p_i(x) \in \Delta_i$, for $i\in\{1,2,3\}$.
    \item[(P3)] Let $x_{\myleft}$ and $x_\myright$ be such that $I_1 = (x_{\myleft},x_\myright)$.
          Then we have: If $p_{1}(x_{\myleft})$ lies above the line through
          $p_2(x_{\myleft})$ and $p_3(x_{\myleft})$,
          then $p_{1}(x_{\myright})$ lies below the line through
          $p_2(x_{\myright})$ and $p_3(x_{\myright})$, and vice versa.
          This guarantees that there exists a value $\bar{x}\in I_1$
          with the desired property and, hence, that $I_1$ contains
          the interval $\bar{I}$ we are looking for. \\
\end{itemize}
In a recursive call we are also given for each $\Delta_i$ the sets $V^*_i\subseteq V^*$
of lines intersecting the interior of~$\Delta_i$, as well as $n_i^-$, the number of lines from
$V^*$ passing completely below~$\Delta_i$.

Initially, $\Delta_1$ is the unbounded
trapezoid\footnote{Since $x_{\myleft}=-\infty$ for this initial trapezoid~$\Delta_1$,
    the points $p_i(x_{\myleft})$ are not well defined. Recall, however, that in the primal plane
    the point on $L_\med$ ``at $x=-\infty$'' corresponds to the vertical balanced line $\ell_1(0)$.
    Hence, we can derive the relative position of $p_{1}(-\infty)$ with respect to the line through
    $p_2(-\infty)$ and $p_3(-\infty)$, from the relative position of the intersection point
    $\ell_2(0)\cap\ell_3(0)$ with respect to~$\ell_1(0)$.}
$(-\infty,-1/\sqrt{3})\times (-\infty,\infty)$. Similarly,
we initially have $\Delta_2 = (-1/\sqrt{3},1/\sqrt{3})\times (-\infty,\infty)$ and
$\Delta_3 = (1/\sqrt{3},\infty)\times (-\infty,\infty)$.
Furthermore,  $V^*_i=V^*$ and $n_i^-=0$ for~$i=1,2,3$.

The recursion ends when the interior of each $\Delta_i$ is intersected
by a single edge of~$L_\med$; we then have $\bar{I} := I_1$.
The recursive call for a given triple $\Delta_1,\Delta_2,\Delta_3$ starts by shrinking
$\Delta_1$ to a trapezoid $\Delta'_1$---thus zooming
in on the value~$\bar{x}$---as follows. (We assume $|V^*_1|>1$,
otherwise the shrinking of~$\Delta_1$ can be skipped.).
\begin{figure}[b]
\centering
\includegraphics{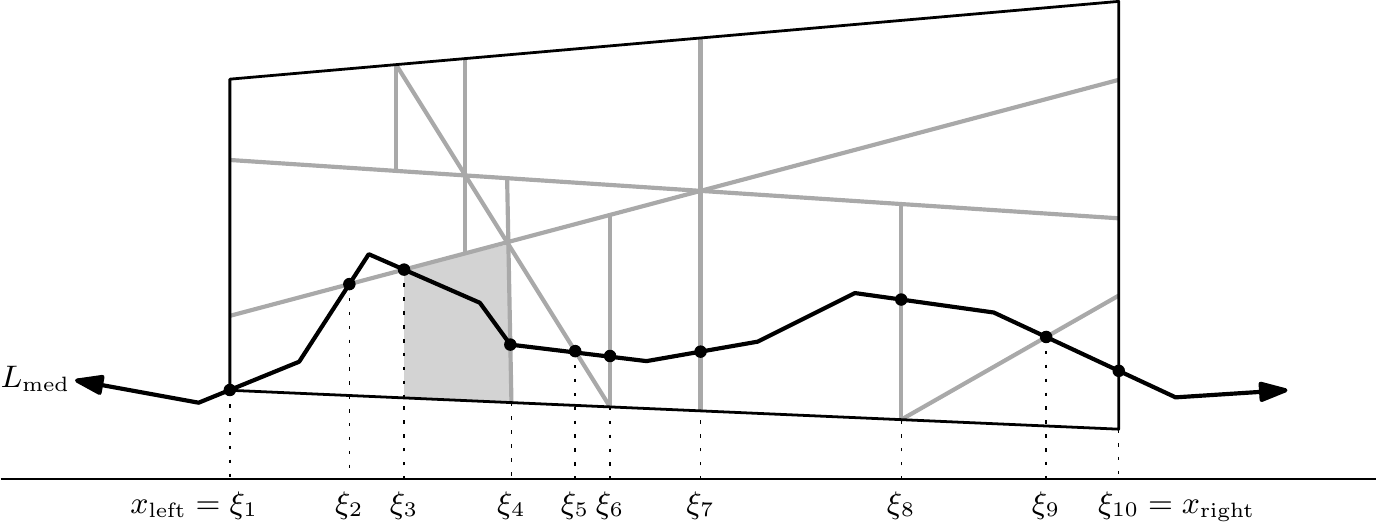}
\caption{The trapezoid $\Delta_1$ (depicted in black) and the cutting~$\Xi$ (depicted in grey).}
\label{fig:cutting}
\end{figure}
\\
\begin{enumerate}
    \item[(i)] Set $r:=2$ and construct a $(1/r)$-cutting for the lines in $V^*_1$, clipped
          to within~$\Delta_1$. In other words, construct a partition $\Xi$ of $\Delta_1$  into $O(r^2)=O(1)$
          smaller trapezoids---see Fig.~\ref{fig:cutting}---such that the interior of each trapezoid $\tau \in \Xi$ is intersected
          by at most $n_1/2$ lines from $V^*_1$, where $n_1 := |V^*_1|$.
          Computing~$\Xi$ can be done in $O(n_1)$ time~\cite{chazelle-cutting}.
    \item[(ii)] Compute the intersections of $L_\med$ with the edges of each trapezoid~$\tau\in\Xi$
          in $O(n_1\log n_1)$ time, as follows.
          \begin{itemize}
              \item Consider a non-vertical edge $e$ of $\tau$.
                    First, compute the level of $p_{\myleft}$, the left endpoint of~$e$. This can be done
                    by counting the number of lines from $V^*_1$ below $p_\myleft$ and adding $n_1^-$ to this number.
                    Next, intersect~$e$ with all lines of $V^*_1$ and sort the intersections along~$e$,
                    distinguishing lines that cross~$e$ ``from above'' and ``from below.''
                    Finally, walk along $e$, starting from $p_\myleft$ towards the right,
                    increasing and decreasing the level according to the type of the intersection we encounter. All intersection points that lie on $L_\med$ can thus be reported.
                    (For simplicity we ignore the case where $e$ partially or fully overlaps $L_\med$.
                    This can either be handled by a simple modification of the procedure,
                    or we can avoid the situation altogether by modifying the cutting such
                    that no edge $e$  of the cutting is contained in an input line.)
                \item For a vertical edge $e$ of $\tau$, we proceed similarly: first compute the level
                      of the lower endpoint of~$e$, and then walk upward along $e$ until we reach
                      an intersection point at the median level, or the upper endpoint of~$e$.
\end{itemize}
    \item[(iii)] The previous step gives us all intersection points
          of $L_\med$ with the edges of trapezoids in $\Xi$. Let $X=\{\xi_1,\ldots \xi_{|X|}\}$ be
          the sorted set of all $x$-coordinates of these intersection points, as illustrated in Fig.~\ref{fig:cutting}. Note that $|X|=O(n_1)$.
          We perform a binary search on~$X$ to find two consecutive $x$-coordinates,
          $\xi_i$ and $\xi_{i+1}$, such that the interval $(\xi_{i},\xi_{i+1})$ contains
          a value~$\bar{x}$ with the desired property. Each step in the binary search can be done in $O(n_1+n_2+n_3)$ time, as follows.

          Assume without loss of generality that $p_{1}(x_{\myleft})$ lies above the line through
          $p_2(x_{\myleft})$ and $p_3(x_{\myleft})$, and that $p_{1}(x_{\myright})$ lies below the line through
          $p_2(x_{\myright})$ and $p_3(x_{\myright})$. Suppose that during the binary search we arrive
          at some $\xi_j\in X$, and we want to decide if we want to proceed to the left or to the
          right of~$\xi_j$. To do so, we first compute the points $p_1(\xi_j)$, $p_2(\xi_j)$,
          and $p_3(\xi_j)$. Point $p_1(\xi_j)$ can be computed in $O(n_1)$ time, as follows:
          first compute all intersections of the vertical line~$x=\xi_j$ with the lines
          in $V^*_1$, and then find the intersection point whose $y$-coordinate has the appropriate rank,
          taking into account that there are $n_1^-$ lines from $V^*\setminus V^*_1$
          fully below~$\Delta_1$. The points $p_2(\xi_j)$ and $p_3(\xi_j)$ can be computed in the
          same way---this takes $O(n_2)$ and $O(n_3)$ time, respectively---after determining their $x$-coordinates in $O(1)$ time. (These $x$-coordinate
          are $\slope(\theta_{\xi_j}+\pi/3)$ and $\slope(\theta_{\xi_j}+2\pi/3)$,
          respectively.)
          After computing $p_1(\xi_j)$, $p_2(\xi_j)$, and $p_3(\xi_j)$ we can make our decision:
          we proceed to the left if $p_{1}(\xi_j)$ lies below the line through
          $p_2(\xi_j)$ and $p_3(\xi_j)$, and to the right if $p_{1}(\xi_j)$ lies above that line. (In the fortunate situation that
          $p_1(\xi_j),p_2(\xi_j),p_3(\xi_j)$ are collinear, we can take $\bar{x} \coloneqq \xi_j$
          and we are done.)

          Since each step in the binary search takes $O(n_1+n_2+n_3)$ time, the total
          binary search takes $O((n_1+n_2+n+3)\log n_1)$ time. Sorting the set $X$
          before the binary search only increases this by a constant factor.
    \item[(iv)] Finally, we take the two $x$-coordinates $\xi_i,\xi_{i+1}$ computed in the previous step,
          and find the points where $L_\med$ crosses the vertical lines $x=\xi_i$ and $x=\xi_{i+1}$.
          Between $x=\xi_i$ and $x=\xi_{i+1}$ we know that $L_\med$ lies inside a single
          trapezoid $\tau\in\Xi$. We then intersect $\tau$ with the slab $(\xi_i,\xi_{i+1})\times (-\infty,\infty)$, to obtain the trapezoid~$\Delta'_1$. (If, for example, we would have $(\xi_i,\xi_{i+1})=(\xi_3,\xi_4)$
          in Fig.~\ref{fig:cutting}, then $\Delta'_1$ is the grey
          trapezoid.)
          Note that the number of lines
          crossing $\Delta'_1$ is at most $n_1/2$ and that the $x$-range $I'_1$ of $\Delta'_1$
          satisfies property~(P3).
\end{enumerate}
\medskip

After shrinking $\Delta_1$ in this manner, we proceed as follows. We first clip $\Delta_2$
such that its $x$-range corresponds to $I_1\oplus\frac{\pi}{3}$, and then we shrink the clipped
trapezoid $\Delta_2$ to a new trapezoid~$\Delta'_2$
in the same way as we shrunk~$\Delta_1$ to $\Delta'_1$. Thus~$\Delta'_2$ is crossed by
at most $n_2/2$ lines and $I'_2 \oplus \frac{-\pi}{3}$ satisfies property~(P3), where
$I'_2$ is the $x$-range of $\Delta'_2$. Next, we clip the $x$-range of $\Delta_3$
to $I'_2\oplus\frac{\pi}{3}$, and then we apply the shrinking procedure to~$\Delta_3$
to obtain a new trapezoid~$\Delta'_3$. Finally, we clip $\Delta'_1$
and $\Delta'_2$ so that their $x$-ranges correspond to $I'_3\oplus\frac{-2\pi}{3}$
and $I'_3\oplus\frac{-\pi}{3}$, respectively.
We then recurse on the triple~$\Delta'_1,\Delta'_2,\Delta'_3$,
passing along the appropriate sets $V^*_i$ and updating the counts~$n_i^-$.

The total time spent in all three shrinking steps is $O((n_1+n_2+n_3)\log (n_1+n_2+n_3))$,
and each $n_i$ halves at every level in the recursion. Hence, the total time
for Step~\ref{alg:find-interval} on page~\pageref{alg:find-interval} is $O(n\log n)$. As already mentioned,
Step~\ref{alg:finish} takes only constant time.
We can conclude that we can find a collinear triple~$p_1(\bar{x}),p_2(\bar{x}),p_3(\bar{x})$ in $O(n\log n)$ time. In the primal this corresponds to a triple of collinear
concurrent lines as in Lemma~\ref{lem:beta-star-2d-upper}, so
we obtain following theorem.
\begin{theorem}
    \label{thm:beta-star-summary}
    \begin{enumerate}
    \item[]
    \item[(i)]
        We have $\beta_2^*=\sqrt{3}/2$. Moreover, for any multiset $V$ of $n$ voters
        in $\Reals^2$ we can compute a point~$p$ with $\beta(p,V)\geq \sqrt{3}/2$ in $O(n \log n)$ time.
    \item[(ii)]
        For $d\geq 3$, we have $1/\sqrt{d} \leq \beta_d^* \leq \sqrt{3}/2$. Moreover, for any
        multiset $V$ of $n$ voters in $\Reals^d$ with $d\geq 2$, we can compute a point~$p$ with
        $\beta(p,V)\geq 1/\sqrt{d}$ in $O(n)$ time.
    \end{enumerate}
\end{theorem}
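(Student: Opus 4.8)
The plan is to assemble Theorem~\ref{thm:beta-star-summary} from the lemmas and the algorithm established above: the combinatorial and algorithmic content has already been developed, so what remains is to put the pieces together and total up the running times.

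For part~(i), the inequality $\beta_2^*\leq\sqrt{3}/2$ is exactly Lemma~\ref{lem:beta-star-2d-ub} with $d=2$. For the matching bound $\beta_2^*\geq\sqrt{3}/2$, I would observe that \emph{every} finite multiset $V\subset\Reals^2$ admits a point~$p$ with $\beta(p,V)\geq\sqrt{3}/2$: by Lemma~\ref{coll-bisector} there is a triple of concurrent balanced lines pairwise at angle $\pi/3$, and by Lemma~\ref{lem:beta-star-2d-upper} their common intersection~$p$ satisfies $\beta(p,V)\geq\sqrt{3}/2$. Taking the supremum over all $V$ yields $\beta_2^*\geq\sqrt{3}/2$, hence equality. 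For the algorithmic claim I would invoke the recursive cutting-based procedure described above (Step~\ref{alg:find-interval} together with Step~\ref{alg:finish}), which locates the orientation $\bar\theta$ realizing a concurrent triple by zooming in on the median level of the dual arrangement $\A(V^*)$ without ever constructing it explicitly; the accounting given there—$O((n_1+n_2+n_3)\log(n_1+n_2+n_3))$ work per recursion level, each $n_i$ halving from level to level, plus $O(1)$ for the final step—sums to $O(n\log n)$, and returning the intersection of the resulting triple gives the desired~$p$.

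For part~(ii), the upper bound $\beta_d^*\leq\sqrt{3}/2$ for $d\geq3$ is again Lemma~\ref{lem:beta-star-2d-ub}. The lower bound $\beta_d^*\geq 1/\sqrt{d}$, together with the $O(n)$-time construction of a point~$p$ with $\beta(p,V)\geq 1/\sqrt{d}$—place $p$ at the common point of $d$ axis-orthogonal balanced hyperplanes, each found by one median computation on a single coordinate—is precisely Lemma~\ref{lem:beta-star-2d-lb}; since that lemma is stated for all $d\geq2$, the $O(n)$ bound also covers $d=2$.

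I do not expect a genuine obstacle in proving the theorem itself: the hard work is the $O(n\log n)$ algorithm for the concurrent-triple problem, which was the point of the preceding pages (and is what improves on the $O(n^{4/3}\log^{1+\eps}n)$ bound of Dumitrescu~\etal). Given that algorithm and Lemmas~\ref{lem:beta-star-2d-ub}, \ref{lem:beta-star-2d-lb}, \ref{lem:beta-star-2d-upper}, and~\ref{coll-bisector}, the proof is a short synthesis; the only point that warrants a moment's care is confirming that, by the monotonicity of $\beta$-plurality, reporting a point with $\beta(p,V)\geq\sqrt{3}/2$ (resp.\ $\geq 1/\sqrt{d}$) is the correct deliverable, which it is.
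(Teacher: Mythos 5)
Your proposal is correct and follows exactly the route the paper intends: the theorem is a synthesis of Lemma~\ref{lem:beta-star-2d-ub} (upper bound), Lemma~\ref{lem:beta-star-2d-lb} (the $1/\sqrt{d}$ lower bound with its $O(n)$ construction), Lemmas~\ref{lem:beta-star-2d-upper} and~\ref{coll-bisector} (existence of the concurrent balanced triple and the resulting $\sqrt{3}/2$ bound in the plane), and the $O(n\log n)$ cutting-based search for that triple. The paper gives no separate proof beyond this assembly, so nothing is missing.
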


\section{Finding a point that maximizes $\texorpdfstring{\beta}{\textbeta}(p,V)$} \label{se:maximize_beta}
We know from Theorem~\ref{thm:beta-star-summary} that, for any multiset $V$ of $n$
voters in $\Reals^d$, we can compute a point~$p$ with $\beta(p,V)\ge 1/\sqrt{d}$
(even with $\beta(p,V)\ge\sqrt{3}/2$, in the plane).
However, a given voter multiset~$V$ may admit a $\beta$-plurality point for larger values of~$\beta$---possibly even for $\beta=1$. In this section we study the
problem of computing a point $p$ that maximizes $\beta(p,V)$,
that is, a point~$p$ with $\beta(p,V)=\beta(V)$.

\subsection{An exact algorithm}
\label{sec:exact}
Below we sketch an exact algorithm to compute $\beta(V)$ together with a point~$p$ such
that $\beta(p,V)=\beta(V)$. Our goal is to show that, for constant~$d$, this can be
done in polynomial time. We do not make a special effort to optimize the exponent in
the running time; it may be possible to speed up the algorithm, but it seems
clear that it will remain impractical,  because of the asymptotic running time,
and also because of algebraic issues.

Note that we can efficiently check whether a true plurality point exists
(i.e., $\beta=1$ can be achieved) in time $O(n \log n)$ by an algorithm of De~Berg~\etal~\cite{bgm-facpp-18},
and if so, identify this point.  Therefore, hereafter $\beta=1$ is used as a sentinel value,
and our algorithm proceeds on the assumption that $\beta(p,V)<1$ for any point~$p$.
\medskip

For a voter $v\in V$, a candidate $p \in \Reals^d$, and an alternative candidate $q\in \Reals^d$,
define $f_v(p,q) \mydef \min(|qv|/|pv|,1)$ when $p\neq v$, and define $f_v(p,q) \mydef 1$ otherwise.
Observe that for $f_v(p,q)<1$ we have
\begin{itemize}
    \item $q$ wins voter~$v$ over $p$ if and only if $\beta>f_v(p,q)$,
    \item $q$ and $p$ have a tie over voter~$v$ if and only if $\beta=f_v(p,q)$, and
    \item $p$ wins voter~$v$ over $q$ if and only if $\beta<f_v(p,q)$.
\end{itemize}
For $f_v(p,q)=1$ this is not quite true: when $p=q=v$ we always have a tie, and when
$|pv|<|qv|$ then $p$ wins~$v$ even when $\beta=f_v(p,q)=1$. When $p=q$ there is a tie
for all voters, so the final conclusion (namely that
$\left |\strut V[p\succ_\beta q] \right | \geq \left|\strut V[p\isbeatenby q]\right|$)
is still correct. The fact that we incorrectly conclude that there is a tie
when $|pv|<|qv|$ and $\beta=f_v(p,q)=1$ does not present a problem either, since
we assume $\beta(p,V)<1$. Hence, we can pretend that checking
if $\beta>f_v(p,q)$, or $\beta=f_v(p,q)$, or $\beta<f_v(p,q)$ tells us whether
$q$ wins~$v$, or there's a tie, or $p$ wins~$v$, respectively.

Hereafter we identify $f_v \colon \Reals^{2d} \to \Reals$ with its
graph~$\{(p,q,f_v(p,q)) \} \subset \Reals^{2d+1}$, which is a $d$-dimensional surface. Let $f_v^+$
be the set of points lying above this graph, and $f_v^-$
be the set of points lying below it. Thus $f_v^+$ is precisely the set of combinations
of $(p,q,\beta)$ where $q$~wins~$v$ over~$p$, while $f_v$ is the set where $p$~ties
with~$q$, and $f_v^-$ is the set where $q$~loses~$v$ to~$p$.
Consider the arrangement $\A\mydef\A(F)$ defined by the set of surfaces $F \mydef \{f_v : v\in V\}$.
Each face~$C$ in~$\A$ is a maximal connected set of points with the property
that all points of~$C$ are contained in, lie below, or lie above, the same subset
of surfaces of $F$. (Note that we consider faces of all dimensions, not just
full-dimensional cells.)
Thus for all $(p,q,\beta)\in C$, exactly one of the following holds:
$\left |\strut V[p\succ_\beta q] \right | < \left|\strut V[p\isbeatenby q]\right|$, or
$\left |\strut V[p\succ_\beta q] \right | = \left|\strut V[p\isbeatenby q]\right|$, or
$\left |\strut V[p\succ_\beta q] \right | > \left|\strut V[p\isbeatenby q]\right|$.
Let $L$ be the union of all faces $C$ of $\A(F)$ such that
$\left |\strut V[p\succ_\beta q] \right | < \left|\strut V[p\isbeatenby q]\right|$, that is,
such that $p$ loses against~$q$ for all $(p,q,\beta)$ in~$C$.
We can construct $\A$ and $L$ in time $O(n^{2d+1})$ using standard machinery,
as $\A$ is an arrangement of degree-4 semi-algebraic surfaces of constant
description complexity~\cite{bcko-cgaa-2008,bpr-arag-06}.
We are interested in the set
\[
  W \mydef
  \{ (p,\beta) : \left |\strut V[p\succ_\beta q] \right | \geq \left|\strut V[p\isbeatenby q]\right|
                     \mbox{for any competitor $q$ } \}
  \subset \Reals^{d+1}.
\]
What is the relationship between $W$ and $L$?  A point $(p,\beta)$ is in $W$
precisely when, for every choice of~$q\in \Reals^d$, $p$ wins at least as many voters as~$q$
(for the given $\beta$).
In other words,
\[
  W=\{(p,\beta) \mid \text{there is no $q$ such that $(p,q,\beta)\in L$} \}.
\]
That is, $W$ is the complement of the projection of $L$ to the space~$\Reals^{d+1}$ representing the pairs~$(p,\beta)$.
The most straightforward way to implement the projection would involve constructing semi-algebraic formulas describing individual faces and invoking quantifier elimination on the resulting formulas \cite{bpr-arag-06}.
Below we outline a more obviously polynomial-time alternative.

Construct the \emph{vertical decomposition} $\vd(\A)$ of~$\A$, which is a refinement
of $\A$ into pieces (``subfaces'' $\tau$), each bounded by at most $2(2d+1)$ surfaces of
constant degree and therefore of constant complexity; see Appendix~\ref{app:vd}.
A vertical decomposition is specified
by ordering the coordinates---we put the coordinates corresponding to $q$ \emph{last}.
Since $\vd(\A)$ is a refinement of $\A$, the set~$L$ is the union of subfaces $\tau$ of $\vd(\A)$
fully contained in $L$.  Since $\A$ is an arrangement of $n$ well-behaved surfaces in $2d+1\geq 5$ dimensions,
the complexity of $\vd(\A)$ is $O(n^{2(2d+1)-4+\eps})=O(n^{4d-2+\eps})$,
for any $\eps>0$ \cite{Koltun-04}. In particular, $L$ comprises $\ell\mydef O(n^{4d-2+\eps})$ subfaces.

Since each $\tau \subset L$ is a subface of the vertical decomposition $\vd(\A)$
in which the last $d$~coordinates correspond to~$q$, the projection $\tau'$ of $\tau$
to $\Reals^{d+1}$ is easy obtain (see Appendix~\ref{app:vd}) in constant time; indeed it can be obtained by discarding the constraints on these last $d$~coordinates from the description of~$\tau$.
Thus, in time $O(\ell)$ we can construct the family of all the projections of the $\ell$~subfaces of~$L$, each a constant-complexity semi-algebraic object in $\Reals^{d+1}$.
We now construct the arrangement~$\A'$ of the resulting collection and its vertical decomposition $\vd(\A')$.   The complexity of $\vd(\A')$ is either $O(\ell^{d+1+\eps})$ or $O(\ell^{2(d+1)-4+\eps})=O(\ell^{2d-2+\eps})$, depending on whether $d+1\leq 4$ or not, respectively \cite{Koltun-04}.  Each subface in $\vd(\A')$ is either fully contained in the projection of $L$ or fully disjoint from it.  Collecting all of the latter subfaces, we obtain a representation of $W$ as a union of at most $O(\ell^{O(d)})=O(n^{O(d^2)})$ constant-complexity semi-algebraic objects.

Now if $(p,\beta) \in W$ is the point with the highest value of $\beta$,
then $\beta(V)=\beta(p,V)=\beta$.
It can be found by enumerating all the subfaces of $\vd(\A')$ contained in the closure of~$W$---we
take the closure because  $V(p,\beta)$ is defined as a supremum---and identifying their topmost point or points.
Since each face has constant complexity,
this can be done in $O(1)$ time per subface.\footnote{Once again, the projection to the $\beta$ coordinate is particularly easy to obtain if, when constructing $\vd(A')$, we set the coordinate corresponding to $\beta$ \emph{first}.}
This completes our description of an $O(n^{O(d^2)})$-time algorithm to compute
the best $\beta$ that can be achieved for a given set of voters $V$,
and the candidate $p$ (or the set of candidates) that achieve this value.

\subsection{An approximation algorithm}
Since computing $\beta(V)$ exactly appears expensive, we now turn our attention to approximation
algorithms. In particular, given a voter set $V$ in $\Reals^d$ and an $\eps \in (0,1/2]$, we wish to
compute a point $\ob{p}$ such that $\beta(\ob{p},V)\geq (1-\eps)\cdot\beta(V)$.

Our approximation algorithm works in two steps. In the first step, we compute a set $P$ of
$O(n/\eps^{2d-1}\log (1/\eps))$ \emph{candidates}.
$P$ may not contain the true optimal point~$p$, but we will
ensure that $P$ contains a point $\ob{p}$ such that $\beta(\ob{p},V)\geq (1-\eps/2)\cdot\beta(V)$.
In the second step, we approximate $\beta(p',V)$ for each $p'\in P$, to find an
approximately best candidate.

\subparagraph{Constructing the candidate set~$P$.}
To construct the candidate set~$P$, we will generate, for each voter $v_i\in V$, a set~$P_i$
of $O(1/\eps^{2d-1}\log (1/\eps))$ candidate points. Our final set $P$ of candidates will
be the union of the sets $P_1, \ldots, P_n$.  Next we describe how to construct~$P_i$.
\medskip

Partition $\Reals^d$ into a set $\C$ of $O(1/\eps^{d-1})$ simplicial cones with apex at $v_i$ and
opening angle $\eps/(2\sqrt{d})$, so that for every pair of points $u$ and $u'$ in the same cone
we have~$\angle uv_{i}u'\leq \eps/(2\sqrt{d})$. We assume for simplicity (and can easily guarantee)
that no voter in $V$ lies on the boundary of any of the cones, except for $v_i$ itself and
any voters coinciding with $v_i$.
Let $\C(v_i)$ denote the set of all cones in $\C$ whose interior contains at least one voter.
For each cone~$C\in \C(v_i)$ we generate a candidate set $G_i(C)$
as explained next, and then we set $P_i \mydef \bigcup_{C\in \C(v_i)} G_i(C) \cup \{v_i\}$.

Let $d_C$ be the distance from $v_i$ to the nearest other voter (not coinciding with~$v_i$) in~$C$.
Let $A_i(C)$ be the closed spherical shell defined by the two spheres of radii~$\eps \cdot d_C$ and~$d_C/\eps$
around~$v_i$, as shown in Fig.~\ref{fi:exp-grid}(i). The open ball of radius~$\eps\cdot d_C$ is denoted by~$A_i^{\myin}(C)$,
and the complement of the closed ball of radius~$d_C/\eps$ is denoted by~$A_i^{\myout}(C)$.
Let $G_i(C)$ be the vertices in an exponential grid defined by a collection of spheres centered at~$v_i$, and
the extreme rays of the cones in $\C$; see Fig.~\ref{fi:exp-grid}(ii).
The spheres have radii $(1+\eps/4)^i \cdot \eps \cdot d_C$, for~$0\leq i \leq \log_{(1+\eps/4)} (1/\eps^2)=O((1/\eps) \log (1/\eps))$.
Observe that~$G_i(C)$ contains not only points in~$C$, but in the entire spherical shell~$A_i(C)$.
The set $G_i(C)$ consists of $O(1/\eps^d \log (1/\eps))$ points,
and it has the following property:
\begin{quote}
  Let $p$ be any point in the spherical shell $A_i(C)$,
 and let $p'$ be a corner of the grid cell containing $p$ and nearest to $p$.
 Then $|p'p| \leq \eps \cdot |p v_i|$. \hfill $(\ast)$
\end{quote}
To prove the property, let $q$ be the point on $pv_i$ such that $|qv_i|=|p'v_i|$.
From the construction of the exponential grid we have $|pq|\leq \frac {\eps}{4} \cdot |pv_i|$.
Since $p'$ and $q$ lie in the same cone $\angle p'v_iq\leq \frac {\eps}{2\sqrt{d}}$
and, consequently, $|p'q| \leq \frac{\eps}{2} \cdot |qv_i| \leq (1+\frac{\eps}{4}) \cdot \frac{\eps}{2} \cdot |pv_i|$.
The property is now immediate since $|pp'|\leq |pq|+|qp'| < \eps \cdot |pv_i|$.

\begin{figure}
    \centering
    \includegraphics[width=12cm]{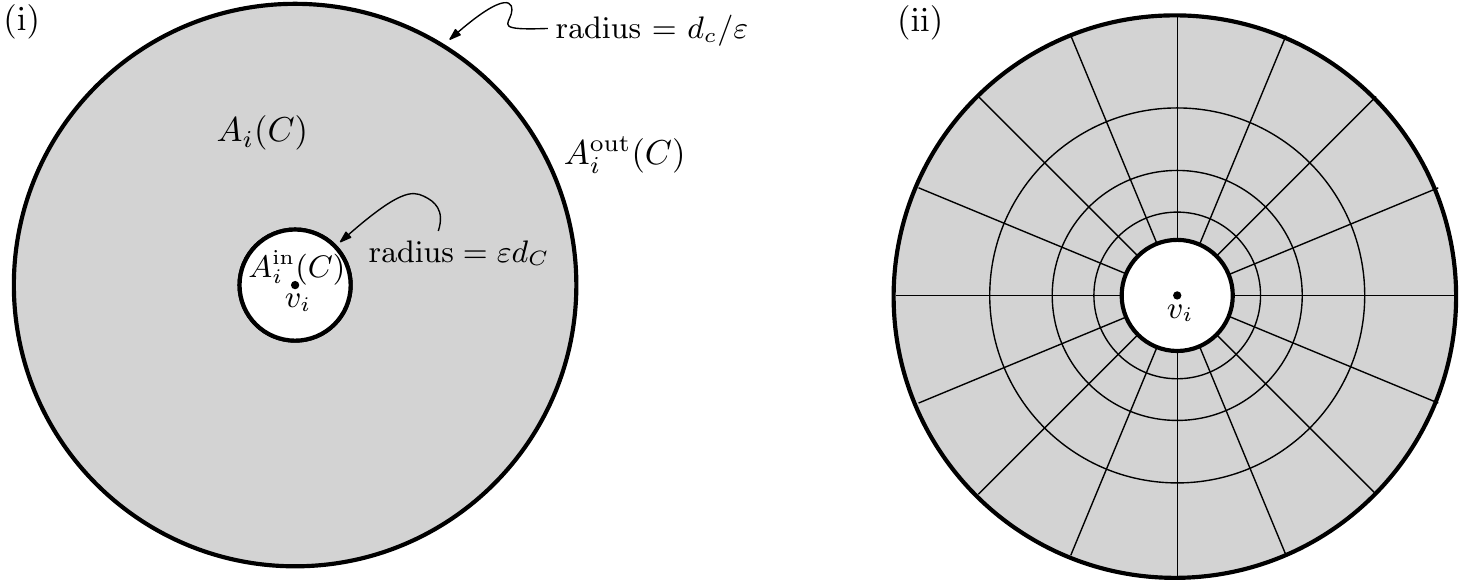}
    \caption{(i) The closed spherical shell $A_i(C)$ defined by the two balls of radii~$\eps \cdot d_C$ and~$d_C/\eps$ around~$v_i$. (ii) The exponential grid $G_i(C)$. The grid is defined by a collection of spheres centered at~$v_i$, plus extreme rays of the cones with apex at~$v_i$. The spheres have radii $(1+\eps/4)^i \cdot \eps \cdot d_C$ for $0\leq i \leq \log_{(1+\eps/4)} (1/\eps^2)=O((1/\eps) \log (1/\eps))$, and the interior angle of a cone is $\eps/2\sqrt{d}$. }
    \label{fi:exp-grid}
\end{figure}

As mentioned above, $P_i \mydef \bigcup_{C\in \C(v_i)} G_i(C) \cup \{v_i\}$,
and the final candidate set $P$ is defined as $P \mydef \bigcup_{v_i\in V} P_i$.
Computing the sets $P_i$ is easy: for each of the $O(1/\eps^{d-1})$ cones $C\in \C(v_i)$,
determine the nearest neighbor of~$v_i$ in~$C$ in $O(n)$ time by brute force,
and then generate~$G_i(C)$ in $O((1/\eps^{(d-1)})\log (1/\eps))$ time. (It is not hard to
speed up the nearest-neighbor computation using appropriate data structures,
but this will not improve the final running time in Theorem~\ref{thm:approximate_plurality_point}.)
We obtain the following lemma.
\begin{lemma}\label{lem:candidate-set-size}
  The candidate set $P$ has size $O(n/\eps^{2d-1}\log (1/\eps))$ and can be constructed in
  $O(n^2/\eps^{d-1}+n/\eps^{2d-1}\log (1/\eps))$ time.
\end{lemma}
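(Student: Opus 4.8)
The plan is to verify the two claims in Lemma~\ref{lem:candidate-set-size} separately: the size bound on~$P$, and the time bound for constructing it. Both reduce to bookkeeping over the three-level decomposition $P = \bigcup_i P_i$, $P_i = \bigcup_{C} G_i(C) \cup \{v_i\}$, so the main work is multiplying together the cardinalities that have already been established in the text preceding the lemma.

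For the size bound, I would argue as follows. Fix a voter $v_i$. The cone collection $\C$ has $O(1/\eps^{d-1})$ cones, since partitioning the unit sphere $S^{d-1}$ into pieces of angular diameter $\Theta(\eps/\sqrt{d})$ takes $O((\sqrt d/\eps)^{d-1}) = O(1/\eps^{d-1})$ cones for constant~$d$. For each cone $C \in \C(v_i)$, the grid $G_i(C)$ was shown to consist of $O(1/\eps^d \log(1/\eps))$ points: it is the product of $O((1/\eps)\log(1/\eps))$ spheres (the radii $(1+\eps/4)^i \eps d_C$ for $0 \le i \le \log_{1+\eps/4}(1/\eps^2)$) with the $O(1/\eps^{d-1})$ directions given by the extreme rays of the cones. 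Hence $|P_i| \le \sum_{C \in \C(v_i)} |G_i(C)| + 1 = O(1/\eps^{d-1}) \cdot O(1/\eps^{d}\log(1/\eps)) = O(1/\eps^{2d-1}\log(1/\eps))$, and summing over the $n$ voters gives $|P| = O(n/\eps^{2d-1}\log(1/\eps))$, as claimed.

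For the running time, I would follow the recipe sketched just before the lemma. To build $P_i$: for each of the $O(1/\eps^{d-1})$ cones $C \in \C$ we find the nearest voter to $v_i$ inside $C$ by brute force in $O(n)$ time (scan all $n$ voters, test cone membership, track the minimum distance), determining both $\C(v_i)$ and the radii $d_C$; this costs $O(n/\eps^{d-1})$ per voter. Then for each $C \in \C(v_i)$ we emit the $O(1/\eps^{d}\log(1/\eps))$ grid points of $G_i(C)$ in time proportional to their number; wait — the excerpt writes the per-cone emission cost as $O((1/\eps^{d-1})\log(1/\eps))$, which I suspect is a typo for $O((1/\eps^{d})\log(1/\eps))$, but in any case summed over cones this contributes $O(1/\eps^{2d-1}\log(1/\eps))$ per voter, which is dominated by the overall per-voter cost of $O(n/\eps^{d-1} + 1/\eps^{2d-1}\log(1/\eps))$. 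Summing over all $n$ voters yields $O(n^2/\eps^{d-1} + n/\eps^{2d-1}\log(1/\eps))$ total, matching the lemma.

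The only genuinely non-routine point is confirming that partitioning $\Reals^d$ into simplicial cones of opening angle $\eps/(2\sqrt d)$ can be done with only $O(1/\eps^{d-1})$ cones and that these cones can be enumerated efficiently enough that the emission of each $G_i(C)$ takes time linear in its output size; this is a standard construction (e.g.\ via a bounded-aspect-ratio grid on a bounding box of the sphere, lifted to cones from the origin), so I would simply cite it or spell it out in one sentence. Everything else is multiplication of known quantities, so I expect no real obstacle — the lemma is essentially an accounting summary of the construction described in the preceding paragraphs.
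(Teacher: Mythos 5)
Your proposal is correct and follows essentially the same route as the paper, which proves the lemma implicitly by the accounting in the construction: $O(1/\eps^{d-1})$ cones per voter, $O(1/\eps^{d}\log(1/\eps))$ grid points per cone (giving $|P_i|=O(1/\eps^{2d-1}\log(1/\eps))$ and $|P|=O(n/\eps^{2d-1}\log(1/\eps))$), plus $O(n)$ per cone for the brute-force nearest-neighbor step and output-linear time to emit each $G_i(C)$. Your observation that the paper's stated per-cone generation time $O((1/\eps^{d-1})\log(1/\eps))$ is inconsistent with the stated size of $G_i(C)$ and is presumably a typo is well taken; either reading yields the claimed total, as you note.
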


The next lemma is crucial to show that $P$ is a good candidate set.
\begin{lemma}\label{lem:exp-grids}
For any point $p\in \Reals^d$, there exists a point $p'\in P$ with the following property:
for any voter~$v_j\in V$, we have that $|p' v_j| \leq (1+2\eps)\cdot |pv_j|$.
\end{lemma}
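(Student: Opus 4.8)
The plan is to pick $v_i$ to be a nearest voter to~$p$ and set $r:=|pv_i|$, so that $|pv_l|\ge r$ for every voter~$v_l\in V$. If the candidate $v_i$ itself already does the job --- that is, if $|v_iv_l|\le(1+2\eps)|pv_l|$ for all voters~$v_l$ (this in particular covers the degenerate case $p=v_i$) --- then I take $p'=v_i$, which lies in~$P$, and I am done. Otherwise there is a \emph{killer} voter~$v_j$ with $|v_iv_j|>(1+2\eps)|pv_j|$, and the goal becomes to show that $p$ lies in the spherical shell~$A_i(C)$, where $C\in\C(v_i)$ is the (unique, by our general-position assumption on the cones) cone with apex~$v_i$ whose interior contains~$v_j$. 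Granting this, property~$(\ast)$ yields a grid point $p'\in G_i(C)\subseteq P_i\subseteq P$ with $|p'p|\le\eps\cdot|pv_i|=\eps r$, and then for every voter~$v_l$ the triangle inequality gives
\[
|p'v_l|\le|p'p|+|pv_l|\le\eps r+|pv_l|\le\eps\,|pv_l|+|pv_l|=(1+\eps)|pv_l|\le(1+2\eps)|pv_l|,
\]
which is the assertion of the lemma.

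To locate the shell I first pin down the relevant scale. Since $|pv_j|\ge r$ and, by the triangle inequality, $|v_iv_j|\le|v_ip|+|pv_j|=r+|pv_j|$, combining these with the killer inequality $|v_iv_j|>(1+2\eps)|pv_j|$ gives $2\eps|pv_j|<r$, hence $|pv_j|<r/(2\eps)$ and in particular $|v_iv_j|<r+r/(2\eps)\le r/\eps$ (here I use $\eps\le1/2$; note also that the killer case forces $r>0$). Let $v_c$ be a voter in~$C$ nearest to~$v_i$, so that $|v_iv_c|=d_C$; since $v_j\in C$ we already know $d_C\le|v_iv_j|<r/\eps$, so the outer radius of $A_i(C)$ is not an obstacle, and it remains to establish the matching lower bound $d_C\ge\eps r$.

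This lower bound is the crux. Suppose, for contradiction, that $|v_iv_c|<\eps r$. Put $\theta:=\angle pv_iv_j$. Substituting the law of cosines $|pv_j|^2=r^2+|v_iv_j|^2-2r|v_iv_j|\cos\theta$ into the squared killer inequality $|v_iv_j|^2>(1+2\eps)^2|pv_j|^2$ and solving for $\cos\theta$, then using $|v_iv_j|>(1+2\eps)r$, yields $\cos\theta>\frac{2\eps(1+\eps)}{1+2\eps}\ge\eps$ --- informally, a killer voter must point ``roughly toward~$p$'' as seen from~$v_i$. Because $v_c$ and $v_j$ lie in the same cone, which has opening angle $\eps/(2\sqrt d)$, the triangle inequality for angles at~$v_i$ gives $\angle pv_iv_c\le\angle pv_iv_j+\angle v_jv_iv_c\le\theta+\eps/(2\sqrt d)$, and a routine estimate (using $\eps\le1/2$) then gives $\cos(\angle pv_iv_c)\ge\eps/2$. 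Finally, the law of cosines in triangle $pv_iv_c$ gives
\[
|pv_c|^2=r^2+|v_iv_c|^2-2r|v_iv_c|\cos(\angle pv_iv_c)\le r^2+|v_iv_c|\bigl(|v_iv_c|-\eps r\bigr)<r^2,
\]
the last step because $0<|v_iv_c|<\eps r$; thus $|pv_c|<r$, contradicting that $v_i$ is a nearest voter to~$p$. Hence $d_C\in[\eps r,\,r/\eps]$, i.e.\ $\eps d_C\le r\le d_C/\eps$, which is exactly the condition $p\in A_i(C)$, completing the argument.

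The main obstacle is this last paragraph: one has to convert the metric killer inequality into the angular statement that $v_j$ --- and therefore every voter lying in the same narrow cone as~$v_j$ --- points nearly toward~$p$ from~$v_i$, so that a voter $v_c$ in that cone which is much closer to~$v_i$ than~$p$ is forced to be strictly closer to~$p$ than~$v_i$, which is absurd. Everything else --- the scale bound from the triangle inequality and the concluding triangle-inequality chain through property~$(\ast)$ --- is routine bookkeeping, and the slack afforded by the factor $1+2\eps$ (rather than $1+\eps$) in the statement leaves ample room in the estimates.
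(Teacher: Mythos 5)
Your proof is correct, and it takes a genuinely different route from the paper's. The paper splits into three cases according to where $p$ sits relative to the shells $A_i(C)$: in some shell (use the grid point near $p$), inside every inner ball (use $v_i$), or outside some outer ball (the delicate case, which requires choosing the cone $C^*$ maximizing $d_{C^*}$, retracting $p$ to an auxiliary point $p''$ at distance $d_{C^*}/\eps$ from $v_i$, and a separate Law-of-Sines estimate showing $|v_iv_j|\leq(1+\eps)|pv_j|$ for voters in far cones). You instead make the dichotomy ``either $v_i$ already works, or there is a killer voter,'' and prove that a killer's cone necessarily has $p$ in its shell: the Law of Cosines converts the killer inequality into $\cos\angle pv_iv_j>\tfrac{2\eps(1+\eps)}{1+2\eps}\geq\eps$, the narrow cone transfers this to the nearest voter $v_c$ in that cone, and then $d_C<\eps r$ would force $|pv_c|<r=|pv_i|$, contradicting the choice of $v_i$; the inner-sphere condition $\eps d_C<r$ follows from the triangle inequality and $\eps\leq 1/2$. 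I checked the algebra ($\tfrac{2\eps(1+\eps)}{1+2\eps}-\eps=\tfrac{\eps}{1+2\eps}>0$, and the angular perturbation by $\eps/(2\sqrt d)$ leaves $\cos\angle pv_iv_c\geq\eps/2$) and it holds. What your approach buys is the complete elimination of the paper's Case~III and its auxiliary point $p''$: once a grid point within $\eps|pv_i|$ of $p$ exists, the single triangle-inequality chain handles all voters at once, so only property $(\ast)$ and one trigonometric lemma are needed; the price is that the case split is no longer determined by the geometry of $p$ alone but by a condition quantified over all voters (which is harmless for an existence proof, and even for the algorithm, since the candidate set $P$ is built identically). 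One trivial slip: after deriving $d_C<r/\eps$ you say the \emph{outer} radius of $A_i(C)$ is not an obstacle, but that inequality (equivalently $\eps d_C<r$) is the \emph{inner}-sphere condition; the outer-sphere condition is exactly the bound $d_C\geq\eps r$ that you then correctly identify as the crux and prove.
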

\begin{proof}
Let $v_i$ be a voter nearest to~$p$. We will argue that the set $P_i$ contains a point~$p'$ with the desired property.
We distinguish three cases.
\\[5mm]
\emph{Case~I: There is a cone $C\in \C(v_i)$ such that $p$ lies in the spherical shell~$A_i(C)$.}
  In this case we pick $p'$ to be a point of $G_i(C)$ nearest to~$p$, that is, $p'$ is a corner
  nearest to~$p$ of the grid cell containing $p$.
  By property~$(\ast)$ we have
  \[
  |p'v_j| \leq |p'p| + |p v_j|
          \leq \eps\cdot |pv_i|  + |pv_j|
          \leq (1+\eps)\cdot |pv_j|,
  \]
  where the last inequality follows from the fact that $v_i$ is a voter nearest to~$p$.
\\[2mm]
\emph{Case~II: Point~$p$ lies in $A_i^{\myin}(C)$ for all $C\in \C(v_i)$.}
  In this case we pick $p'\mydef v_i$. Clearly $|p' v_j| = 0 \leq (1+\eps)\cdot |pv_j|$
  for~$j=i$.
  For $j\neq i$, we argue as follows. Let $C\in\C(v_i)$ be the cone containing~$v_j$.
  Since we are in Case~II we know that $p\in A_i^{\myin}(C)$, and so
  \begin{equation} \label{eq1}
  |p'v_j| \leq |p'p| + |p v_j|
            \leq \eps d_C  + |pv_j|
            \leq \eps |p' v_j| + |pv_j|.
  \end{equation}
  Moreover, we have
  \begin{equation} \label{eq2}
  |pv_j| \geq |p' v_j| - |pp'|
         \geq |p' v_j| - \eps d_C
         \geq |p' v_j|/2,
  \end{equation}
  where the last step uses that $\eps\leq 1/2$ and $d_C \leq |p' v_j|$.
  Combining \eqref{eq1} and \eqref{eq2},
we obtain $|p' v_j| \leq (1+2\eps)\cdot |pv_j|$.
\\[2mm]
\emph{Case~III: Cases~I~and~II do not apply.}
\begin{figure}[t]
        \centering
        \includegraphics[width=10cm]{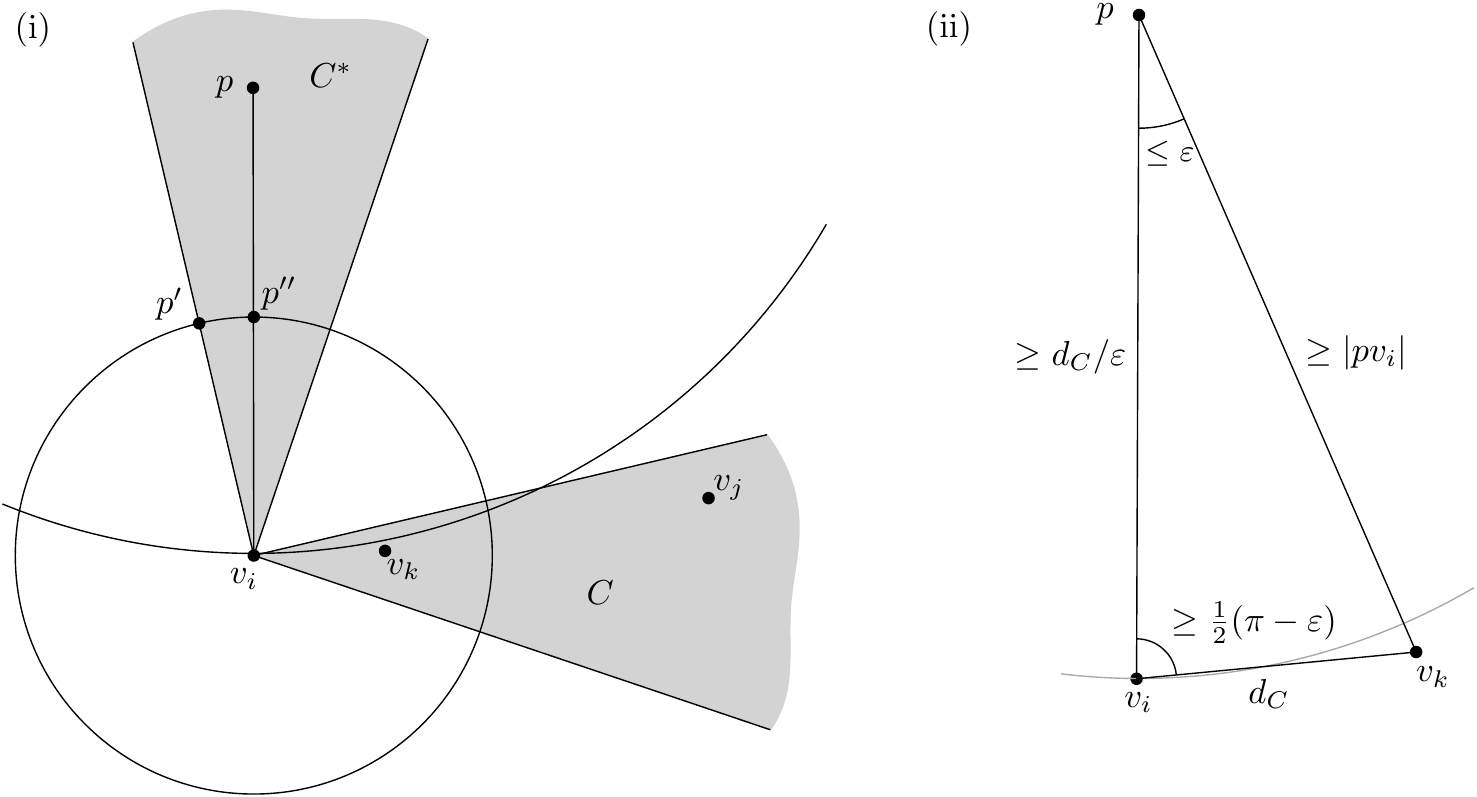}
        \caption{Illustration for Case~III.}
        \label{fig:ApxAlg}
    \end{figure}
In this case there is at least one cone $C$ such that $p\in A_i^{\myout}(C)$.
  Of all such cones, let $C^*$ be
  the one whose associated distance $d_{C^*}$ is maximized. Let $p''$ be the point on the segment $pv_i$ at distance $d_C/\eps$ from $v_i$. Without loss of generality, we will assume that $p$ and $v_i$ only differ in the $x_d$ coordinate;  see Fig.~\ref{fig:ApxAlg}(i).

  We will prove that the point $p'$ of $G_i(C^*)$ nearest to $p''$ (refer to Fig.~\ref{fig:ApxAlg}(i)) has the desired property. Consider a voter~$v_j$. We distinguish three cases.
  \begin{itemize}
  \item When $i=j$, then we have
        \[
        |p' v_i| \leq |p' p''| + |p'' v_i| \leq (1+\eps) |p'' v_i| \leq (1+\eps)|p v_i|,
        \]
        where the second inequality follows from ($\ast$).
    \item When $v_j$ lies in a cone~$C$ such that $p\in A_i^{\myin}(C)$,
        then we can use the same argument as in Case~II to show that    $|p' v_j| \leq (1+2\eps)\cdot |pv_j|$.
      \item In the remaining case $v_j$ lies in a cone~$C$ such that $p\in A_i^{\myout}(C)$.
Let $v_k$ be a voter in $C$ nearest to~$v_i$.
        Since $|v_iv_k|=d_C$, $|pv_i| \geq d_C/\eps$, and $|pv_k|\geq |pv_i|$, we can deduce that $\angle{pv_iv_k} \geq \pi/2-\eps/2$, as illustrated in Fig.~\ref{fig:ApxAlg}(ii).
        Furthermore, since $v_k$ and $v_j$ belong to the same cone $C$ the angle $\angle{v_kv_iv_j}$ is bounded by $\varepsilon/2\sqrt{d}\leq \varepsilon/2$ according to the construction. Putting the two angle bounds together we conclude that $\angle{pv_iv_j}\geq \frac{\pi}{2}-\eps$.
        Now consider the triangle defined by $p, v_i$ and $v_j$. From the Law of Sines we obtain
        \[
          \frac{|v_iv_j|}{\sin \angle{v_ipv_j}}=\frac{|pv_j|}{\sin \angle{pv_iv_j}}, \text{\quad or\quad}
|v_iv_j|=|pv_j|\cdot \frac{\sin \angle{v_ipv_j}}{\sin \angle{pv_iv_j}} \leq \frac{|pv_j|}{\cos \eps} \leq (1+\eps) \cdot |pv_j|,
        \]
        for $\eps<1/2$.
        Since $p''$ lies on the line between $p$ and $v_i$ we have:
        \[
          |p''v_j| \leq \max \{|pv_j|,|v_iv_j|\} \leq (1+\eps) \cdot |pv_j|.
        \]

        Finally we get the claimed bound by noting that $|p'p''|\leq \eps \cdot |p'v_i|$ (from~(*)),
        \[
          |p'v_j| \leq |p'p''| + |p''v_j|\leq \eps \cdot |p'v_i| + (1+\eps) \cdot |pv_j| \leq (1+2\eps) \cdot |pv_j|.  \qedhere
        \]
  \end{itemize}
\end{proof}

\subparagraph{An approximate decision algorithm.}
Given a point $p$, a positive real value $\varepsilon$ and the voter multiset $V$, we say
that an algorithm~$\Alg$ is an \emph{$\eps$-approximate} decision algorithm if
\myitemize{
 \item $\Alg$ answers \textsc{yes} if $p$ is a $\beta$-plurality point, and
 \item $\Alg$ answers \textsc{no} if $p$ is not a $(1-\eps)\beta$-plurality point.
}
In the remaining cases, where $(1-\eps)\beta < \beta(p,V) < \beta$, $\Alg$ may answer \textsc{yes} or \textsc{no}.
\medskip

Next we propose an $\eps$-approximate decision algorithm~$\Alg$.
The algorithm will use the so-called Balanced Box-Decomposition (BBD) tree
introduced by Arya and Mount~\cite{am-ars-00}. BBD trees are hierarchical
space-decomposition trees such that each node~$\mu$ represents a region in~$\Reals^d$,
denoted by~$\region(\mu)$, which is a $d$-dimensional
axis-aligned box or the difference of two such boxes. A BBD tree for a set $P$ of
$n$~points in $\Reals^d$ can be built in $O(n \log n)$ time using $O(n)$ space.
It supports $(1+\eps)$-approximate range counting queries with convex query ranges
in $O(\log n+\eps^{1-d})$ time~\cite{am-ars-00}. In our algorithm all query ranges
will be balls, hence a $(1+\eps)$-approximate range-counting query for a
$d$-dimensional ball $s(v,r)$ with center at $v$ and radius $r$ returns an integer~$I$
such that $|P \cap s(v,r)| \leq I \leq |P \cap s(v,(1+\eps)r)|$.

Our $\eps$-approximate decision algorithm $\Alg$ works as follows.
\begin{enumerate}
 \item Construct a set $Q$  of $O(n/\eps^{d-1})$ potential candidates competing
       against~$p$, as follows.  Let $Q(v)$ be a set of $O(1/\varepsilon^{d-1})$
       points distributed uniformly on the boundary of the ball $s(v,(1-\eps/2)\cdot \beta \cdot|pv|)$,
       such that the distance between any point on the boundary and its nearest
       neighbor in $Q(v)$ is at most $\frac{\eps}{4\sqrt{d}} \cdot |pv| \leq \frac{\eps}{4} \cdot \beta \cdot |pv|$. In the last step we use the fact that $\beta\geq 1/\sqrt{d}$, according to Lemma~\ref{lem:beta-star-2d-lb}.
       Set $Q \mydef Q(v_1) \cup \cdots \cup Q(v_n)$.
 \item Build a BBD tree $\tree$ on $Q$. Add a counter $c(\mu)$ to each node
       $\mu$ in $\tree$, initialized to zero.
 \item For each voter $v \in V$ perform a $(1+\eps/4)$-approximate range-counting
       query with $s(v,(1-\eps/4)\cdot \beta\cdot |pv|)$ in $\tree$. We modify the
       search in $\tree$ slightly as follows. If an internal node $\mu \in T$ is
       visited and expanded during the search, then for every non-expanded child
       $\mu'$ of $\mu$ with $\region(\mu')$ entirely contained in
       $s(v,(1+\eps/4)(1-\eps/4)\cdot \beta\cdot |pv|)) \subset s(v,\beta\cdot |pv|)$
       we increment the counter $c(\mu')$. Similarly, if a leaf is visited then
       the counter is incremented if the point stored in the leaf lies within
       $s(v,(1-\eps/4)\cdot \beta\cdot |pv|)$.
 \item For a leaf $\mu$ in $\tree$, let $M(\mu)$ be the set of nodes in $\tree$ on the
        path from the root to~$\mu$, and let $C(\mu)=\sum_{\mu' \in M(\mu)} c(\mu')$.
        Compute $C(\mu)$ for all leaves $\mu$ in $\tree$ by a
        pre-order traversal of $\tree$, and set $C \mydef \max_{\mu}C(\mu)$.
\item If $C \leq n/2$, then return \textsc{yes}, otherwise \textsc{no}.
\end{enumerate}

To prove correctness of the algorithm we define, for a given~$\gamma>0$,  a \emph{fuzzy ball}~$s_{\gamma}(v,r)$ to be any set such that
$s(v,r)\subseteq s_{\gamma}(v,r) \subseteq s(v,(1+\gamma)r)$.  Thus if $q\in s(v,r)$ then $q\in s_{\gamma}(v,r)$, if $q\not\in s(v,(1+\gamma)r)$  then $q\not\in s_{\gamma}(v,r)$, and otherwise $q$ may or may not be inside in $s_{\gamma}(v,r)$. We now observe that for each voter $v_i\in V$ there is a fuzzy ball $s_{\eps/4}(v_1,(1-\eps/4)\cdot \beta\cdot |pv_i|)$ such that the value $C(\mu)$ for a leaf $\mu$ storing a point $q$ is the depth of $q$ in the arrangement, denoted by $\A_{\eps/4}(V,1-\eps/4)$, of the fuzzy balls
$s_{\eps/4}(v_1,(1-\eps/4)\cdot \beta\cdot |pv_1|), \ldots , s_{\eps/4}(v_n,(1-\eps/4)\cdot \beta\cdot |pv|)$.

\begin{lemma} \label{lem:ApxDecAlg}
Algorithm $\Alg$ \emph{$\eps$-approximately} decides if $p$ is a $\beta$-plurality point in time $O(\frac{n}{\varepsilon^{d-1}} \log \frac{n}{\varepsilon^{d-1}})$.
\end{lemma}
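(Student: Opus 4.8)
My plan is to establish two things: correctness --- that $\Alg$ answers \textsc{yes} whenever $p$ is a $\beta$-plurality point and \textsc{no} whenever $p$ is not a $(1-\eps)\beta$-plurality point --- and the stated running-time bound. The key to correctness is to recognise what the number $C=\max_{\mu}C(\mu)$ actually computes. By the observation preceding the lemma, for each voter $v_i$ there is a fuzzy ball $F_i$ with $s(v_i,(1-\eps/4)\beta|pv_i|)\subseteq F_i\subseteq s(v_i,(1+\eps/4)(1-\eps/4)\beta|pv_i|)$ such that $C(\mu)$ equals the depth, in the arrangement of $F_1,\dots,F_n$, of the point of $Q$ stored at leaf $\mu$; hence $C$ is the maximum over $q\in Q$ of the number of the $F_i$ that contain $q$. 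Since $(1+\eps/4)(1-\eps/4)=1-\eps^2/16<1$ we have $F_i\subseteq B(v_i,\beta|pv_i|)=D_\beta(p,v_i)$, while $q$ lies in $F_i$ as soon as $|qv_i|<(1-\eps/4)\beta|pv_i|$. I will also use that the mesh $\tfrac{\eps}{4\sqrt d}|pv|$ of $Q(v)$ is at most $\tfrac{\eps}{4}\beta|pv|$, which holds because $\beta\geq1/\sqrt d$ by Lemma~\ref{lem:beta-star-2d-lb}.

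For the \textsc{yes} direction, if $p$ is a $\beta$-plurality point then, for every $q$, the disjoint sets $V[p\beats q]$ and $V[p\isbeatenby q]$ satisfy $|V[p\isbeatenby q]|\leq|V[p\beats q]|$, hence $|V[p\isbeatenby q]|\leq n/2$. As $|V[p\isbeatenby q]|$ equals the depth of $q$ in $\A(\D_\beta(p))$ and each $F_i\subseteq D_\beta(p,v_i)$, every point --- in particular every $q\in Q$ --- lies in at most $n/2$ of the $F_i$; therefore $C\leq n/2$ and $\Alg$ returns \textsc{yes}.

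For the \textsc{no} direction, suppose $p$ is not a $(1-\eps)\beta$-plurality point, so some point beats $p$ at scale $(1-\eps)\beta$. A routine perturbation argument turns this witness into a strict one: for a generic direction $w$, each circle through a given tie point is entered by exactly one of $w$ and $-w$, so one of $w,-w$ enters at least half of them, and the resulting perturbation yields a point $q^*$ with $S:=\{v\in V:|q^*v|<(1-\eps)\beta|pv|\}$ of size $>n/2$. It now suffices to produce a point $q\in Q$ with $|qv|<(1-\eps/4)\beta|pv|$ for all $v\in S$, for then $q$ lies in more than $n/2$ of the $F_i$, so $C>n/2$ and $\Alg$ returns \textsc{no}. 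To find such a $q$, consider the bounded convex body $W:=\bigcap_{v\in S}B(v,(1-\eps/2)\beta|pv|)$, which contains $q^*$; move $q^*$ within the region of points winning more than $n/2$ voters at scale $(1-\eps/2)\beta$ until one reaches a point $q'$ lying on the sphere $\partial s(v_0,(1-\eps/2)\beta|pv_0|)$ of a voter $v_0\in S$ with $|pv_0|$ smallest over $S$, with $q'\in\overline W$ so that $|q'v|\leq(1-\eps/2)\beta|pv|$ for every $v\in S$, and take $q\in Q(v_0)$ a nearest neighbour of $q'$, so $|qq'|\leq\tfrac{\eps}{4}\beta|pv_0|$. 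Then for every $v\in S$, using $|pv_0|\leq|pv|$,
\[
|qv|\leq|qq'|+|q'v|\leq\tfrac{\eps}{4}\beta|pv|+(1-\eps/2)\beta|pv|=(1-\eps/4)\beta|pv|,
\]
as required (the borderline equality, which occurs only when all voters of $S$ are equidistant from $p$, is absorbed by the strict inequalities in the mesh and fuzzy-ball definitions). This rounding step is the main obstacle: one must show that a winning witness can indeed be relocated onto such a sphere without dropping below a majority, and the budget is tight --- the $\tfrac{\eps}{2}\beta$ gap between the scales $(1-\eps)\beta$ and $(1-\eps/2)\beta$, the $\tfrac{\eps}{4}\beta$ mesh of $Q(v_0)$, and the remaining $\tfrac{\eps}{4}\beta$ up to scale $(1-\eps/4)\beta$ must be made to balance, which in particular forces the snapping to occur at the sphere of a $p$-nearest voter of $S$.

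For the running time, Step~1 computes the distances $|pv_i|$ and the $O(1/\eps^{d-1})$ points of each $Q(v_i)$ in $O(n/\eps^{d-1})$ total time, and Step~2 builds the BBD tree on $|Q|=O(n/\eps^{d-1})$ points in $O(\tfrac{n}{\eps^{d-1}}\log\tfrac{n}{\eps^{d-1}})$ time. In Step~3, each of the $n$ approximate range-counting queries visits $O(\log|Q|+\eps^{1-d})$ nodes of $\tree$ (the Arya--Mount bound); incrementing $c(\mu')$ at the $O(1)$ non-expanded children of each visited internal node increases the per-query cost by only a constant factor, so Step~3 takes $O(n\log\tfrac{n}{\eps^{d-1}}+\tfrac{n}{\eps^{d-1}})$ time. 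Step~4 is one pre-order traversal computing $C(\mu)=C(\pa(\mu))+c(\mu)$, i.e.\ $O(|Q|)$ time, and Step~5 is $O(|Q|)$. Summing and absorbing lower-order terms gives the claimed $O(\tfrac{n}{\eps^{d-1}}\log\tfrac{n}{\eps^{d-1}})$ bound.
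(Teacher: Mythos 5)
Your \textsc{yes} direction and your running-time analysis coincide with the paper's, and your explicit perturbation argument showing that a failed $(1-\eps)\beta$-plurality point yields a witness of depth strictly greater than $n/2$ is a legitimate (and welcome) filling-in of a step the paper only asserts. The problem is the \textsc{no} direction, where you depart from the paper's argument and the departure does not work as written.

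The gap is the existence of the point $q'$. You fix $v_0\in S$ as a voter of $S$ nearest to $p$ and claim the witness can be relocated to a point $q'$ lying \emph{both} on the sphere $\partial s(v_0,(1-\eps/2)\beta|pv_0|)$ \emph{and} in $\overline{W}=\bigcap_{v\in S}\overline{B(v,(1-\eps/2)\beta|pv|)}$. No such point need exist: if $\bigcap_{v\in S\setminus\{v_0\}}\overline{B(v,(1-\eps/2)\beta|pv|)}$ happens to lie in the open ball $B(v_0,(1-\eps/2)\beta|pv_0|)$ (e.g.\ two far-apart voters of $S$ whose large balls overlap only in a thin lens sitting well inside $v_0$'s smaller ball), then $\overline{W}$ never touches $v_0$'s sphere. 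Choosing instead a voter whose sphere does support $\partial\overline{W}$ breaks your final estimate, which crucially uses $|pv_0|\leq|pv|$ to convert the mesh bound $|qq'|\leq\tfrac{\eps}{4}\beta|pv_0|$ into $\tfrac{\eps}{4}\beta|pv|$. (You also conflate ``staying in the region of points winning more than $n/2$ voters'' with ``staying in $\overline{W}$ for the fixed set $S$''; the winning set can change along the motion.) You flag this step yourself as ``the main obstacle,'' and it is: it is exactly the part that is not proved and, in the form stated, is false.

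The paper sidesteps relocation entirely. It takes $q'$ to be the point of $Q$ nearest to the witness $q$ and then, \emph{for each voter $v$ separately} with $q\in s(v,(1-\eps)\beta|pv|)$, bounds $|vq'|$: if $q'$ has left that ball, let $x$ be where the segment $qq'$ exits it; $q'$ is also the nearest point of $Q$ to $x$, so $|xq'|\leq|xq''|$ for the point $q''\in Q(v)\subseteq Q$ nearest to the projection $x'$ of $x$ onto $\partial s(v,(1-\eps/2)\beta|pv|)$, giving $|xq'|\leq\tfrac{\eps}{2}\beta|pv|+\tfrac{\eps}{4}\beta|pv|$ and hence $|vq'|\leq(1-\eps/4)\beta|pv|$. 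Because the net $Q(v)$ on $v$'s own sphere is available for every $v$, no single sphere has to carry all voters of $S$ simultaneously, which is precisely what your construction requires and cannot guarantee. To repair your proof you would essentially have to switch to this per-voter argument.
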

\begin{proof}
We start by analyzing the running time of the algorithm. Constructing the set of points in~$Q$
can be done in time linear in $|Q|$, while building the BBD-tree~$\tree$ requires
$O((n/\varepsilon^{d-1}) \log (n/\varepsilon^{d-1}))$ time~\cite[Lemma~1]{am-ars-00}.
Next, the algorithm performs $n$ approximate range queries, each requiring
$O(\log \frac{ n}{\varepsilon^{d-1}}+\frac{1}{\varepsilon^{d-1}})$ time~\cite[Theorem~2]{am-ars-00}).
Note that the small modification we made to the query algorithm to update the counters does
not increase the asymptotic running time. Finally, the traversal of $\tree$ to compute~$C$
takes time linear in the size of $\tree$, which is $O(n/\varepsilon^{d-1})$.

\begin{figure}
        \centering
        \includegraphics[width=7cm]{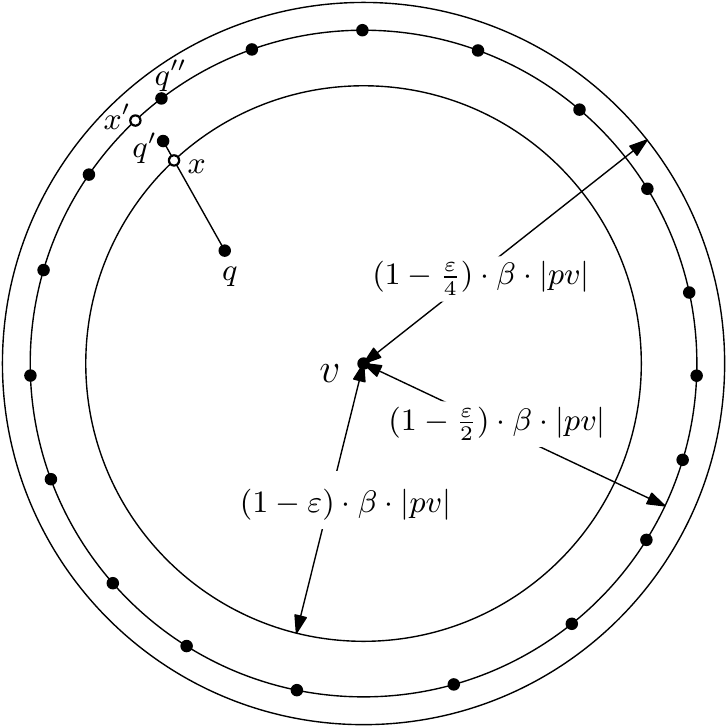}
        \caption{Illustrating the three balls of different radius used in the correctness proof of Lemma~\ref{lem:ApxDecAlg}.}
        \label{fig:dec_apx}
    \end{figure}

It remains to prove that $\Alg$ is correct.

\begin{itemize}
\item If $p$ is a plurality point there can be no point $q \in \Reals^d$
      having depth greater than $n/2$ in the arrangement of the balls $
s(v_1,\beta\cdot |pv|), \ldots , s(v_n,\beta\cdot |pv|)$.
      Since $s_{\eps/4}(v,(1-\eps/4)\cdot \beta \cdot|pv|) \subset s(v,\beta \cdot|pv|)$,
      for all $v$, $\Alg$ could not have found a point with
      depth greater than $n/2$, and hence, must return~\textsc{yes}.
\item If $p$ is not a $(1-\eps) \beta$-plurality point, then there exists a
      point $q$ with depth greater than $n/2$ in the arrangement $\A(V,1-\eps)$ of the balls $
s(v_1,(1-\eps)\cdot \beta\cdot |pv|), \ldots , s(v_n,(1-\eps)\cdot \beta\cdot |pv|)$. Let $q'$ be the point in $Q$
      nearest to $q$. We claim that for any ball $s(v,(1-\eps)\cdot \beta \cdot|pv|)$ that contains~$q$,
      its expanded version $s(v,(1-\eps/4) \cdot \beta \cdot|pv|)$ contains~$q'$.
      Of course, if $s(v,(1-\eps)\cdot \beta \cdot|pv|)$ contains $q'$ then we are done.
      Otherwise, let $x$ be the point where $qq'$ intersects the boundary of
      $s(v,(1-\eps)\cdot \beta \cdot|pv|)$; see Fig.~\ref{fig:dec_apx}.
      Note that $q'$ must also be the point in $Q$ nearest to $x$.

      Let $x'$ be the point on the boundary of $s(v,(1-\eps/2)\cdot \beta \cdot|pv|)$ nearest to $x$,
      and let $q''$ be a point in $Q$ on the boundary of $s(v,(1-\eps/2)\cdot \beta \cdot|pv|)$.
      By construction, we have
      \[
      |xx'|=\frac{\eps}{4} \cdot \beta\cdot |pv| \quad \textrm{and} \quad |x'q''|\leq \frac{\eps}{4} \cdot \beta \cdot |pv|
      \]
      and, by the triangle inequality, we obtain
      \[
      |xq'| \leq |xq''| \leq |xx'| + |x'q''| \leq  \frac{\eps}{2} \cdot \beta \cdot |pv|.
      \]
This implies that $s(v,(1-\eps/4)\cdot \beta \cdot|pv|) \subseteq s_{\eps/4}(v,(1-\eps/4)\cdot \beta \cdot|pv|)$
      must contain $q'$. Consequently, if $q$ has depth at least $n/2$ in $\A(V,1-\eps)$ then $q'$ has depth at
      least $n/2$ in the arrangement $\A_{\eps/4}(V,(1-\eps/4))$, and hence, the algorithm will return \textsc{no}.  \qedhere
\end{itemize}
\end{proof}

\subparagraph{The algorithm.}
Now we have the tools required to approximate $\beta(V)$. First, generate the set $P$ of $O(\frac{n}{\eps^{2d-1}} \log \frac {1}{\eps})$ candidate points. For each candidate point $p \in P$, perform a binary search for an approximate $\beta^*(p)$ in the interval $[1/\sqrt{d},1]$, until the remaining search interval has length at most $\eps/2 \cdot 1/\sqrt{d}$. For each $p$ and $\beta^*$, $(\eps/2)$-approximately decide if $p$ is a $\beta^*$-plurality point in $V$. Return the largest $\beta^*$ and the corresponding point $p$ on which the algorithm says \textsc{yes}.

\begin{theorem} \label{thm:approximate_plurality_point}
Given a multiset $V$ of voters in $\Reals^d$, a $((1-\eps)\cdot \beta(V))$-plurality point can be computed in $O(\frac{n^2}{\eps^{3d-2}} \cdot \log \frac{n}{\eps^{d-1}} \cdot \log^2 \frac {1}{\eps})$. \end{theorem}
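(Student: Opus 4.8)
The plan is to assemble the theorem from three pieces that are already in hand: the candidate set~$P$ of Lemma~\ref{lem:candidate-set-size} together with its covering guarantee (Lemma~\ref{lem:exp-grids}), the $\eps$-approximate decision procedure~$\Alg$ of Lemma~\ref{lem:ApxDecAlg}, and the point of Lemma~\ref{lem:beta-star-2d-lb} as a safety net. Almost all of the substance has been established; what remains is to wire the pieces together and, crucially, to choose the several error parameters as suitably small constant fractions of~$\eps$ so that the resulting multiplicative losses telescope into a single factor $(1-\eps)$.

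First I would isolate an elementary monotonicity fact: if a point $p'$ satisfies $|p'v|\le(1+\gamma)\cdot|pv|$ for every voter $v\in V$, then $\beta(p',V)\ge\beta(p,V)/(1+\gamma)$. Indeed, fix any $\beta\le\beta(p,V)$, set $\beta'\mydef\beta/(1+\gamma)$, and observe that for every competitor~$q$ we have $\beta'\cdot|p'v|\le\beta\cdot|pv|$; hence $V[p\succ_\beta q]\subseteq V[p'\succ_{\beta'}q]$ while $V[p'\prec_{\beta'}q]\subseteq V[p\prec_\beta q]$, so $\bigl|V[p'\succ_{\beta'}q]\bigr|\ge\bigl|V[p\succ_\beta q]\bigr|\ge\bigl|V[p\prec_\beta q]\bigr|\ge\bigl|V[p'\prec_{\beta'}q]\bigr|$, i.e.\ $p'$ is a $\beta'$-plurality point. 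Now build $P$ via Lemma~\ref{lem:candidate-set-size} with parameter $\eps/12$ in place of $\eps$ (which changes neither the asymptotic size of~$P$ nor its construction time). By Lemma~\ref{lem:exp-grids}, for every point $p$ there is a $p'\in P$ with $|p'v|\le(1+\eps/6)\cdot|pv|$ for all voters~$v$; combining this with the monotonicity fact and taking a reference point~$p$ with $\beta(p,V)$ arbitrarily close to $\beta(V)$ (there are only finitely many points in $P$, so a single $\ob p\in P$ works for a sequence of reference points) yields a point $\ob p\in P$ with $\beta(\ob p,V)\ge(1-\eps/6)\cdot\beta(V)$.

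The algorithm then proceeds as follows. For each $p\in P$, binary-search over the grid $\{\tfrac1{\sqrt d}+k\cdot\tfrac{\eps}{6\sqrt d}\}\cap[\tfrac1{\sqrt d},1]$ for the largest grid value $\beta^*$ on which $\Alg$, run as an $(\eps/6)$-approximate decision algorithm, answers \textsc{yes} for $(p,\beta^*)$; record the pair $(p,\beta^*)$. The grid starts at $1/\sqrt d$, which is exactly what is needed both for the correctness of $\Alg$ (it uses $\beta\ge1/\sqrt d$) and for the conversion below of the grid resolution into a multiplicative error. The grid has $O(\sqrt d/\eps)=O(1/\eps)$ points, so each binary search makes $O(\log(1/\eps))$ calls to $\Alg$. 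In addition we always keep the pair $(p_{\mathrm{lb}},1/\sqrt d)$, where $p_{\mathrm{lb}}$ is the point of Lemma~\ref{lem:beta-star-2d-lb}, for which $\beta(p_{\mathrm{lb}},V)\ge1/\sqrt d$ holds without any testing. The output is a point attaining the maximum value among all recorded pairs. The running time is $|P|$ times $O(\log(1/\eps))$ calls to $\Alg$, i.e.\ $O\bigl(\tfrac{n}{\eps^{2d-1}}\log\tfrac1\eps\bigr)\cdot O(\log\tfrac1\eps)\cdot O\bigl(\tfrac{n}{\eps^{d-1}}\log\tfrac{n}{\eps^{d-1}}\bigr)=O\bigl(\tfrac{n^2}{\eps^{3d-2}}\cdot\log\tfrac{n}{\eps^{d-1}}\cdot\log^2\tfrac1\eps\bigr)$, and the $O\bigl(\tfrac{n^2}{\eps^{d-1}}\bigr)$ cost of building~$P$ is subsumed by this.

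It remains to argue that the returned point $p_{\mathrm{ret}}$, with recorded value $\beta^*_{\mathrm{ret}}$, is a $\bigl((1-\eps)\beta(V)\bigr)$-plurality point. In every case $\beta(p_{\mathrm{ret}},V)\ge(1-\eps/6)\cdot\beta^*_{\mathrm{ret}}$: for a pair coming from a binary search this is the \textsc{yes}-guarantee of the $(\eps/6)$-approximate~$\Alg$, and for $(p_{\mathrm{lb}},1/\sqrt d)$ it follows from $\beta(p_{\mathrm{lb}},V)\ge1/\sqrt d$. If $\beta(\ob p,V)>1/\sqrt d$, then, since $\beta$-plurality is monotone, $\Alg$ answers \textsc{yes} on every grid value strictly below $\beta(\ob p,V)$, so the search for~$\ob p$ records a value $\beta^*_{\ob p}\ge\beta(\ob p,V)-\tfrac{\eps}{6\sqrt d}\ge(1-\eps/6)\beta(\ob p,V)$ (using $\beta(\ob p,V)>1/\sqrt d$); hence $\beta^*_{\mathrm{ret}}\ge\beta^*_{\ob p}\ge(1-\eps/6)^2\beta(V)$ and $\beta(p_{\mathrm{ret}},V)\ge(1-\eps/6)^3\beta(V)\ge(1-\eps)\beta(V)$. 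Otherwise $\beta(\ob p,V)\le1/\sqrt d$, and then $(1-\eps/6)\beta(V)\le\beta(\ob p,V)\le1/\sqrt d$ gives $\beta(V)\le\tfrac1{(1-\eps/6)\sqrt d}$; using the safety-net pair we then have $\beta^*_{\mathrm{ret}}\ge1/\sqrt d$ and $\beta(p_{\mathrm{ret}},V)\ge(1-\eps/6)\cdot\tfrac1{\sqrt d}\ge(1-\eps/6)^2\beta(V)\ge(1-\eps)\beta(V)$. The only delicate point in the argument is exactly this accounting: converting the \emph{distance} blow-up of Lemma~\ref{lem:exp-grids} into a multiplicative loss in~$\beta$, turning the additive grid resolution into a multiplicative loss (which is where $\beta\ge1/\sqrt d$ enters), absorbing the slack of~$\Alg$, and covering the degenerate regime in which $\beta(V)$ is essentially $1/\sqrt d$. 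None of this is deep, but all the $\eps$-parameters must be set to appropriate constant fractions of~$\eps$ for the three losses to compose.
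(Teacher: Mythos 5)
Your proposal is correct and follows essentially the same route as the paper: build the candidate set of Lemma~\ref{lem:candidate-set-size}, convert the distance guarantee of Lemma~\ref{lem:exp-grids} into a multiplicative loss in~$\beta$, and binary-search over $[1/\sqrt d,1]$ using the $\eps$-approximate decider of Lemma~\ref{lem:ApxDecAlg}, with the running time multiplying out identically. Your write-up is in fact more careful than the paper's (which gives only the algorithm sketch), notably in the explicit monotonicity lemma linking distance blow-up to $\beta$, the pigeonhole handling of the supremum, and the $1/\sqrt d$ safety net for the degenerate regime.
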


\section{Concluding Remarks}
\label{se:conclusion}
We proved that any finite set of voters in $\Reals^d$ admits a $\beta$-plurality point
for $\beta=1/\sqrt{d}$ and that some sets require $\beta=\sqrt{3}/2$. For $d=2$ we managed
to close the gap by showing that $\beta^*_2=\sqrt{3}/2$. One of the main open problems is
to close the gap for $d>2$. Recall that recently the bounds for $d\geq 4$ have been
improved---see Footnote~\ref{footnote1} in the introduction---but there is still a small gap left
between the upper and lower bound.

We also presented an approximation algorithm that finds, for a given~$V$,
a $(1-\eps)\cdot\beta(V)$-plurality point. The algorithm runs in $O^*(n^2/\eps^{3d-2})$
time. Another open problem is whether a subquadratic approximation algorithm exists,
and to prove lower bounds on the time to compute $\beta(V)$ or $\beta(p,V)$ exactly.
Finally, it will be interesting to study $\beta$-plurality points in other metrics,
for instance in the personalized $L_1$-metric~\cite{bgm-facpp-18} for $d>2$ or in the $L_1$-metric for $d\geq 2$.

\section*{Acknowledgements}
The authors would like to thank Sampson Wong for improving an earlier
version of  Lemma~\ref{lem:beta-star-2d-upper}.

\bibliographystyle{plain}
\bibliography{bibliography}

\begin{thebibliography}{10}

\bibitem{am-ars-00}
Sunil Arya and David~M. Mount.
\newblock Approximate range searching.
\newblock {\em Computational Geometry -- Theory \& Applications},
  17(3):135--152, 2000.

\bibitem{bpr-arag-06}
Saugata Basu, Richard Pollack, and Marie-Fran\c{c}oise Roy.
\newblock {\em Algorithms in Real Algebraic Geometry (Algorithms and
  Computation in Mathematics)}.
\newblock Springer-Verlag, Berlin, Heidelberg, 2006.

\bibitem{bcko-cgaa-2008}
Mark~de Berg, Otfried Cheong, Marc~van Kreveld, and Mark Overmars.
\newblock {\em Computational Geometry: Algorithms and Applications}.
\newblock Springer-Verlag, 3rd edition, 2008.

\bibitem{bgm-facpp-18}
Mark~de Berg, Joachim Gudmundsson, and Mehran Mehr.
\newblock Faster algorithms for computing plurality points.
\newblock {\em ACM Transactions on Algorithms}, 14(3), 2018.

\bibitem{b-rgdm-48}
Duncan Black.
\newblock On the rationale of group decision-making.
\newblock {\em Journal of Political Economy}, 56(1):23--34, 1948.

\bibitem{chazelle-cutting}
Bernard Chazelle.
\newblock Cutting hyperplanes for divide-and-conquer.
\newblock {\em Discrete \& Computational Geometry}, 9:145--158, 1993.

\bibitem{vd-paper}
Bernard Chazelle, Herbert Edelsbrunner, Leonidas~J. Guibas, and Micha Sharir.
\newblock A singly exponential stratification scheme for real semi-algebraic
  varieties and its applications.
\newblock {\em Theoretical Computer Science}, 84:77--105, 1991.
\newblock Also in \emph{Proc.~16th Int. Colloq. on Automata, Languages and
  Programming}, 1989, pp. 179--193.

\bibitem{jonathan}
Jonathan Chung.
\newblock Optimally locating a new candidate in spatial and valence models of
  voting games, 2018.
\newblock Honours thesis, University of Sydney.

\bibitem{d-ibpks-98}
Tamal~K. Dey.
\newblock Improved bounds for planar $k$-sets and related problems.
\newblock {\em Discrete \& Computational Geometry}, 19(3):373--382, 1998.

\bibitem{d-etpad-57}
Anthony Downs.
\newblock {An Economic Theory of Political Action in a Democracy}.
\newblock {\em Journal of Political Economy}, 65(2):135--135, 1957.

\bibitem{dpt-dhcnl-12}
Adrian Dumitrescu, J{\'{a}}nos Pach, and G{\'{e}}za T{\'{o}}th.
\newblock Drawing {H}amiltonian cycles with no large angles.
\newblock {\em The Electronic Journal of Combinatorics}, 19(2):P31, 2012.

\bibitem{em-ppscm-83}
James Enelow and Melvin Hinisch.
\newblock {On Plott's pairwise symmetry condition for majority rule
  equilibrium}.
\newblock {\em Public Choice}, 40(3):317--321, 1983.

\bibitem{e-vp-19}
Haldun Evrenk.
\newblock Valence politics.
\newblock In Roger~D. Congleton, Bernard Grofman, Stefan Voigt, and Haldun
  Evrenk, editors, {\em Oxford {H}andbook of {P}ublic {C}hoice}, chapter~13.
  Oxford University Press, 2019.

\bibitem{feld-centripetal}
Scott Feld, Bernard Grofman, and Nicholas Miller.
\newblock Centripetal forces in spatial voting: On the size of the yolk.
\newblock {\em Public Choice}, 59:37--50, 10 1988.

\bibitem{filtser-paper}
Arnold Filtser and Omrit Filtser.
\newblock Plurality in spatial voting games with constant $\beta$.
\newblock Manuscript, 2020.

\bibitem{supreme}
Noah Giansiracusa and Cameron Ricciardi.
\newblock Computational geometry and the {U.S.} supreme court.
\newblock {\em Math. Soc. Sci.}, 98:1--9, 2019.

\bibitem{ghr-eavec-11}
Fabian Gouret, Guillaume Hollard, and St{\'{e}}phane Rossignol.
\newblock An empirical analysis of valence in electoral competition.
\newblock {\em Social Choice and Welfare}, 37(2):309--340, 2011.

\bibitem{gw-cysvg-19}
Joachim Gudmundsson and Sampson Wong.
\newblock Computing the yolk in spatial voting games without computing median
  lines.
\newblock In {\em The 33rd {AAAI} Conference on Artificial Intelligence}, pages
  2012--2019. {AAAI} Press, 2019.

\bibitem{hlm-ppcsv-08}
Helios Herrera, David~K. Levine, and C{\'e}sar Martinelli.
\newblock Policy platforms, campaign spending and voter participation.
\newblock {\em Journal of Public Economics}, 92(3):501--513, 2008.

\bibitem{hr-aavas-08}
Guillaume Hollard and St{\'e}phane Rossignol.
\newblock An alternative approach to valence advantage in spatial competition.
\newblock {\em Journal of Public Economic Theory}, 10(3):441--454, 2008.

\bibitem{Koltun-04}
Vladlen Koltun.
\newblock Almost tight upper bounds for vertical decompositions in four
  dimensions.
\newblock {\em Journal of the {ACM}}, 51(5):699--730, 2004.

\bibitem{yolk-mckelvey}
Richard~D. McKelvey.
\newblock Covering, dominance, and institution-free properties of social
  choice.
\newblock {\em American Journal of Political Science}, 30(2):283--314, 1986.

\bibitem{miller-handbook}
Nicholas~R. Miller.
\newblock The spatial model of social choice and voting.
\newblock In Jac~C. Heckelman and Nicholas~R. Miller, editors, {\em Handbook of
  Social Choice and Voting}, chapter~10, pages 163--181. Edward Elgar
  Publishing, 2015.

\bibitem{plott}
Charles~R. Plott.
\newblock A notion of equilibrium and its possibility under majority rule.
\newblock {\em The American Economic Review}, 57(4):787--806, 1967.

\bibitem{empirical}
David Sanders, Harold~D. Clarke, Marianne~C. Stewart, and Paul Whiteley.
\newblock Downs, stokes and the dynamics of electoral choice.
\newblock {\em British Journal of Political Science}, 41(2):287--314, 2011.

\bibitem{DS-book}
Micha Sharir and Pankaj~K. Agarwal.
\newblock {\em Davenport-Schinzel Sequences and Their Geometric Applications}.
\newblock Cambridge University Press, 1995.

\bibitem{s-smpc-63}
Donald~E. Stokes.
\newblock Spatial models of party competition.
\newblock {\em American Political Science Review}, 57(2):368--377, 1963.

\bibitem{w-tpsed-05}
Alan~E. Wiseman.
\newblock A theory of partisan support and entry deterrence in electoral
  competition.
\newblock {\em Journal of Theoretical Politics}, 18(2):123--158, 2006.

\bibitem{wlwc-cppcp-13}
Yen{-}Wei Wu, Wei{-}Yin Lin, Hung{-}Lung Wang, and Kun{-}Mao Chao.
\newblock Computing plurality points and condorcet points in {E}uclidean space.
\newblock In {\em Proceedings of the 24th International Symposium on Algorithms
  and Computation ({ISAAC})}, volume 8283 of {\em Lecture Notes in Computer
  Science}, pages 688--698. Springer, 2013.

\end{thebibliography}

\appendix

\section*{Appendix}

\section{A primer on vertical decompositions}
\label{app:vd}
We follow the notation and terminology of \cite{DS-book,vd-paper}.
A \emph{vertical decomposition} is, roughly, any partition of space into finitely many
so-called cylindrical cells (see below for a definition); it need not be a topological complex.
A \emph{vertical decomposition of an arrangement} is a refinement of an arrangement into
cylindrical cells,\footnote{The specific decomposition depends on the algorithm used
to construct it and on the ordering of the coordinates. In the computational- and combinatorial-geometry literature, one often
speaks of ``the vertical decomposition of the arrangement'' in the sense of
``the vertical decomposition obtained by applying the algorithm, say, of
Chazelle \etal \cite{vd-paper} or of Koltun \cite{Koltun-04}, to the given arrangement.''}
where \emph{refinement} means that each cylindrical cell is a subset of a face
in the arrangement.
We define cylindrical cells recursively.  To simplify the notation, any inequality limit in our definitions
can be omitted, i.e., replaced by a $\pm \infty$, as appropriate.  For example, when we talk about an open
interval $(a,b)$, i.e., the set of numbers $x$ with $a < x < b$, we include the possibilities of the
unbounded intervals $(-\infty,b)$, $(a,+\infty)$, and $(-\infty,+\infty)$.

A \emph{one-dimensional} cylindrical cell is either a singleton or an open interval $(a,b)$.
So a one-dimensional vertical decomposition is a decomposition of $\Reals$ into a finite number
of singletons and intervals.

We now define a cylindrical cell $\tau$ in $\Reals^2$.  Its projection $\tau'$ to
the $x_1$-axis is a cylindrical cell in $\Reals$.  The cell $\tau$ must have one of the following two forms:
\begin{itemize}
\item $\{(x_1,f_2(x_1)) \mid x_1 \in \tau' \}$, where
  $f_2: \tau' \to \Reals$ is a continuous total function, or
\item $\{(x_1,x_2) \mid x_1 \in \tau', f_2(x_1)<x_2<g_2(x_1) \}$, where
  $f_2,g_2: \tau' \to \Reals$ are two continuous total functions, with the property that
  $f_2(x_1)<g_2(x_1)$ for all $x_1 \in \tau'$.
\end{itemize}
If $\tau'$ is a singleton, the former defines a vertex and the latter an (open) vertical segment.  If $\tau'$ is an interval, the former defines an open monotone arc (a portion of the graph of the function $f_2$) and the latter an open \emph{pseudo-trapezoid} delimited by two (possibly degenerate)
vertical segments on left and right and by the two disjoint function graphs below and above.  (Recall that any of the limits may be omitted.  For example, $\Reals^2$ is a legal cell in a trivial two-dimensional vertical decomposition consisting only of itself, where all the limits have been ``replaced by infinities.'')

A cylindrical cell $\tau \subset \Reals^d$ is defined recursively.  Its projection $\tau'$ is a cylindrical cell in~$\Reals^{d-1}$.  Moreover, $\tau$ must have one of the following forms:
\begin{itemize}
\item $\{(x',f_d(x_1,\dots,x_{d-1})) \mid x' \in \tau' \}$, where
  $f_d: \tau' \to \Reals$ is a continuous total function, or
\item $\{(x',x_d) \mid x' \in \tau', f_d(x')<x_d<g_d(x') \}$, where
  $f_d,g_d: \tau' \to \Reals$ are two continuous total functions, with the property that
  $f_d(x')<g_d(x')$ for all $x' \in \tau'$.
\end{itemize}

A cylindrical cell is fully specified by giving its dimension and the sequence of functions~$f_i$
or pairs of functions~$f_i,g_i$, as appropriate.  In particular, the projection of the cell in
a $k$-dimensional decomposition to its first $k'<k$ coordinates can be obtained by retaining the
inequalities in the first $k'$ coordinates and discarding the remaining ones.

\end{document}